\theoremstyle{definition}
\newtheorem{assumption}{Assumption}
\newcommand{\paratitle}[1]{\noindent\textbf{#1.}\;\xspace}
\definecolor{humanblue}{RGB}{107, 174, 214} 
\definecolor{llmred}{RGB}{251, 106, 74}   
\newcommand{\posnum}[1]{\cellcolor{humanblue!20}{#1}}
\newcommand{\negnum}[1]{\cellcolor{llmred!20}{#1}}
\newcommand{\poschange}[1]{\textcolor{humanblue!80!black}{#1}}
\newcommand{\negchange}[1]{\textcolor{llmred!80!black}{#1}}
\begin{document}

\title{How Do LLM-Generated Texts Impact Term-Based Retrieval Models?}

\author{Wei Huang}
\affiliation{%
  \institution{State Key Laboratory of AI Safety, Institute of Computing Technology, Chinese Academy of Sciences University of Chinese Academy of Sciences}
  \city{Beijing}
  \country{China}
}
\email{huangwei21b@ict.ac.cn}

\author{Keping Bi}
\affiliation{%
  \institution{State Key Laboratory of AI Safety, Institute of Computing Technology, Chinese Academy of Sciences University of Chinese Academy of Sciences}
  \city{Beijing}
  \country{China}
}
\email{bikeping@ict.ac.cn}

\author{Yinqiong Cai}
\affiliation{%
  \institution{Baidu Inc.}
  \city{Beijing}
  \country{China}
}
\email{caiyinqiong@163.com}

\author{Wei Chen}
\affiliation{%
  \institution{State Key Laboratory of AI Safety, Institute of Computing Technology, Chinese Academy of Sciences University of Chinese Academy of Sciences}
  \city{Beijing}
  \country{China}
}
\email{chenwei2022@ict.ac.cn}

\author{Jiafeng Guo}
\affiliation{%
  \institution{State Key Laboratory of AI Safety, Institute of Computing Technology, Chinese Academy of Sciences University of Chinese Academy of Sciences}
  \city{Beijing}
  \country{China}
}
\email{guojiafeng@ict.ac.cn}

\author{Xueqi Cheng}
\affiliation{%
  \institution{State Key Laboratory of AI Safety, Institute of Computing Technology, Chinese Academy of Sciences University of Chinese Academy of Sciences}
  \city{Beijing}
  \country{China}
}
\email{cxq@ict.ac.cn}

\renewcommand{\shortauthors}{Trovato et al.}

\begin{abstract}
As more content generated by large language models (LLMs) floods into the Internet, information retrieval (IR) systems now face the challenge of distinguishing and handling a blend of human-authored and machine-generated texts.  
Recent studies suggest that neural retrievers may exhibit a preferential inclination toward LLM-generated content, while classic term-based retrievers like BM25 tend to favor human-written documents. This paper investigates the influence of LLM-generated content on term-based retrieval models, which are valued for their efficiency and robust generalization across domains.  
Our linguistic analysis reveals that LLM-generated texts exhibit smoother high-frequency and steeper low-frequency Zipf slopes, higher term specificity, and greater document-level diversity. These traits are aligned with LLMs being trained to optimize reader experience through diverse and precise expressions. Our study further explores whether term-based retrieval models demonstrate source bias, concluding that these models prioritize documents whose term distributions closely correspond to those of the queries, rather than displaying an inherent source bias. This work provides a foundation for understanding and addressing potential biases in term-based IR systems managing mixed-source content.
\end{abstract}

\begin{CCSXML}
<ccs2012>
   <concept>
       <concept_id>10002951.10003317.10003338</concept_id>
       <concept_desc>Information systems~Retrieval models and ranking</concept_desc>
       <concept_significance>500</concept_significance>
       </concept>
 </ccs2012>
\end{CCSXML}

\ccsdesc[500]{Information systems~Retrieval models and ranking}

\keywords{Source Bias, Mixed-source Retrieval, AIGC, Large Language Models}




\maketitle

\section{Introduction}
The rise of large language models (LLMs) has profoundly reshaped the online information ecosystem. With their growing adoption in applications such as information seeking, content creation, and writing refinement, the volume of LLM-generated content on the web has surged dramatically. According to \cite{schick2020deep}, artificial intelligence-generated content (AIGC) could account for up to 90\% of the web by 2026. While LLM-generated content is often fluent and coherent, it also poses significant risks, including hallucinated misinformation and embedded biases towards certain groups \cite{wen2024evaluating}. In this age of pervasive LLM influence, information retrieval (IR) systems must contend with a mix of human-authored and machine-generated content. However, the implications of this evolving landscape for retrieval systems remain largely underexplored.

Recent studies have identified a source bias in neural retrievers, which systematically favor LLM-generated content over human-written texts with similar semantics \cite{dai2024neural, xu2024invisible}. In contrast, classic term-based retrievers, such as BM25 \cite{robertson1995okapi}, have been observed to prefer human-authored documents. While these empirical findings are intriguing, they lack comprehensive validation and clear explanations. To bridge this gap, this paper investigates the influence of LLM-generated content on term-based retrieval models, which are lightweight, transparent, and highly interpretable. This analysis is significant, as term-based models remain widely used, particularly for corpora with limited relevance annotations.

Since, in term-based retrievers, documents and queries are represented as vocabulary-size vectors, indicating their term distributions, we begin by identifying the linguistic distribution differences between LLM-generated and human-written texts. Using parallel corpora containing human and LLM-generated documents with similar semantics, we analyze their adherence to Zipf's Law \cite{newman2005power,adamic2002zipf}, term specificity, and expression diversity. Our key findings can be summarized as follows:  
1) LLM-generated texts generally exhibit smoother slopes in the high-frequency range of the Zipf curve (representing core vocabulary) and steeper slopes in the low-frequency range (representing extended vocabulary). This suggests that LLMs employ a more diverse core vocabulary for frequent communication but a less diverse extended vocabulary.  
2) Term specificity, as measured by IDF, is typically higher in LLM-generated texts compared to human-written ones.  
3) LLM-generated documents demonstrate greater average document-level diversity than their human-authored counterparts. Additionally, LLMs tend to use a wider array of synonyms to convey the same meaning.  

These observations align with the principle of least effort in linguistics \cite{cancho2003least, zipf2016human}, which suggests that language users, both speakers and listeners, gravitate to minimize their effort during communication. Consequently, the slope of the core vocabulary segment in Zipf's curve tends to approach -1, reflecting a balance of competing forces, as illustrated in Figure \ref{fig:zipf_summarization}. Compared to human-written texts, LLMs reduce the cognitive load on listeners or readers by employing more diverse and specific expressions. This is consistent with the fact that LLMs are trained using reinforcement learning from human feedback (RLHF) to better align with human values. 

In exploring the assumption that term-based retrieval models exhibit a source bias towards human-written text, we aim to determine whether this bias is consistently present and identify the factors that may contribute to it. First, we theoretically demonstrate that the preferences of term-based retrievers are influenced by the degree of alignment between the term distributions of queries and documents. Then, we investigate four representative term-based retrievers: TF-IDF, BM25 \cite{robertson1995okapi}, Query Likelihood (QL) \cite{ponte2017language}, and Divergence from Randomness (DFR) \cite{amati2002probabilistic}. 
Unlike existing studies focusing solely on human-written queries, we also explore LLM-rephrased queries, which are becoming an important query type in retrieval-augmented generation (RAG) as LLMs issue them based on their information needs for retrieval~\cite{asai2024self}.

Consistent with existing findings, we observe that for human-written queries, these retrievers favor human-written documents. Conversely, for LLM-rephrased queries, they tend to prefer LLM-generated documents. 
This observation aligns with our theoretical analysis of the relationship between model preference and distribution alignment. It indicates that term-based retrievers do not have inherent source bias, and we can predict model preferences by measuring the distribution alignment between queries and documents.

Our studies shed light on how LLM-generated texts differ from human-written ones from a linguistic perspective and how they impact term-based retrieval models. We hope our work alleviates concerns about source bias in LLM-generated content, particularly for term-based retrieval models, and offers insights for future efforts to mitigate source bias in retrieval systems, ultimately towards trustworthy information access in the LLM era.

\section{Background and Related Work}

\subsection{Characteristics of LLM-generated Text}

Large language models (LLMs) generate text that differs from human
authorship along lexical, syntactic, and stylistic axes.
Lexically, prior studies have shown that LLM output employs a
narrower vocabulary, uses auxiliary verbs more frequently, and contains
fewer content words than human writing
\cite{seals2023long,mingmeng2024chatgpt,munoz2024contrasting}.
Syntactically, generated sentences are usually simpler, embedding fewer
complex modifiers and favoring passive constructions
\cite{munoz2024contrasting}.
From a stylistic perspective, LLM text is highly coherent and
grammatically well-formed
\cite{liao2023differentiating,yan2023detection}, yet it expresses a
more limited range of affect, particularly negative emotions, than
human prose \cite{munoz2024contrasting}.

Beyond surface‐level features, LLM‐generated text appears to obey the
same scaling regularities that characterize human language, notably
Zipf’s law in word–frequency distributions
\cite{de2024human,wang2024llm}.
While these findings delineate where LLM-generated language converges with or diverges from human production, their explanatory power remains limited, particularly regarding the underlying mechanisms that cause these differences.

In this paper, we analyze the linguistic distribution differences between human and LLM text through the lenses of Zipf’s law, term specificity, and expression diversity. We then interpret these observations through the principle of least effort, aiming to provide new insights into why term-based retrievers treat the two sources of text differently.

\subsection{Term-based Retrieval Models}

Term-based retrieval methods form the foundation of information retrieval research, where documents and queries are represented based on the bag-of-words assumption that treats text as an unordered collection of terms. Among these methods, the Vector Space Model (VSM) introduces a geometric framework where text is represented as sparse vectors in a term space. The model computes relevance through vector similarity metrics, typically using the TF-IDF weighting scheme~\cite{salton1975vector}.

Probabilistic retrieval models estimate the relevance probability between queries and documents, guided by the Probability Ranking Principle. BM25 represents the most widely-adopted model in this category~\cite{robertson1995okapi, robertson2009probabilistic}. In contrast, language modeling approaches reframe the retrieval problem by estimating the probability of generating the query from a document's term distribution, introducing the query likelihood paradigm~\cite{ponte2017language}.

Several other principled approaches have been developed. The Divergence From Randomness (DFR) models term weights through divergence from random term distributions~\cite{amati2002probabilistic}. Axiomatic models formalize retrieval constraints through formal axioms~\cite{fang2004formal}. Information-based models measure relevance based on the differences in term behavior between document and collection levels~\cite{clinchant2010information}.

\subsection{Effect of AIGC on Retrieval Systems}

The increasing presence of AIGC has transformed the information ecosystem. Although large language models (LLMs) demonstrate impressive language understanding and generation capabilities, the text they produce may contain factual errors or hallucinations \citep{huang2023survey}, embed societal or demographic biases \citep{ayoub2024inherent}. These quality and reliability concerns become more critical because retrieval systems, which mediate user access to mixed-source corpora, can further magnify the impact of problematic content.

One concrete manifestation of this amplification is the so-called source bias. Recent studies have identified a systematic source bias in neural retrievers: they prefer LLM-generated content over semantically equivalent human-written text~\citep{dai2024neural,dai-etal-2024-cocktail,xu2024invisible}. This bias undermines search fairness and result diversity. For neural models, this preference has partly been attributed to perplexity differences between human and machine text~\citep{wang2025perplexity}. In contrast, term-based retrieval methods show varying ranking preferences~\cite{dai2024neural}, yet the underlying mechanisms remain unexplored. This knowledge gap is particularly significant as term-based retrieval methods remain widely adopted for large-scale corpus retrieval due to their efficiency and scalability. Our research examines how language distribution differences between human-written and LLM-generated texts influence term-based retrieval behavior, aiming to provide a deeper understanding of what causes apparent source preferences in mixed-source corpus.

\section{Dataset Setup}
Prior work studies source preference by rewriting documents with an LLM while keeping semantics fixed \cite{dai2024neural,dai-etal-2024-cocktail}.  
We extend this idea in two directions: (1) rewriting both documents and queries (reflecting realistic RAG scenarios where both inputs may be LLM-generated), and (2) varying the document generation temperature across three settings (0.2, 0.7, 1.0) to evaluate robustness. Controlling meaning while varying these lexical factors allows us to isolate the effect of text source on term-based retrieval. Furthermore, to ensure external validity, we additionally include the HC3-English corpus~\cite{guo-etal-2023-hc3}, which comprises natively produced AIGC.

\paratitle{LLM-Rewritten Parallel Collections}
We select eight human-authored datasets covering a spectrum of formality: informal forum discussions (ANTIQUE~\cite{DBLP:conf/ecir/HashemiAZC20}, FiQA-2018~\cite{maia201818}), moderately formal web/wiki content (MS MARCO~\cite{DBLP:conf/nips/NguyenRSGTMD16}, FEVER~\cite{thorne2018fever}, NQ~\cite{kwiatkowski2019natural}, Climate-FEVER~\cite{diggelmann2020climate}, HotpotQA~\cite{yang2018hotpotqa}), and highly formal academic papers (SCIDOCS~\cite{cohan2020specter}). For each dataset, we construct a parallel LLM-generated collection by rewriting all human documents with Llama-3-8B-Instruct~\cite{dubey2024llama}, following the semantic-preserving protocol proposed in prior work \cite{dai2024neural,dai-etal-2024-cocktail}. 
Queries are rewritten with GPT-4~\cite{achiam2023gpt} at temperature 0.2. We intentionally use two different models: Llama-3 offers a favorable cost-quality trade-off for processing large document collections, while the much smaller number of queries makes the use of GPT-4 economically viable, enabling higher-quality reformulations. For both documents and queries, we manually inspected 200 random pairs per dataset and confirmed semantic preservation.  Dataset statistics are listed in Table~\ref{tab:dataset_stat}. 

\paratitle{Native AIGC Evaluation Set}
Synthetic rewrites afford tight control but may not capture the full variability of naturally occurring AI-generated language. To evaluate more realistic conditions, we include the HC3-English corpus~\cite{guo-etal-2023-hc3}, which provides a natural contrast between human and machine-generated responses. This dataset contains 24.3k user questions, 58.5k human-written answers, and 26.9k answers generated by ChatGPT, spanning five domains: general knowledge explanation (from the ELI5 dataset), encyclopedic content (from Wikipedia), open-domain question answering, medicine, and finance. Crucially, each question is paired with both human and AI-authored answers, enabling direct comparison of retrieval behavior across sources under authentic generation conditions. In our setup, each question serves as a query, and each answer paragraph is treated as a candidate relevant document. 

Further details of the rewriting pipeline, prompts, rewritten‐quality checks (e.g., Jaccard overlap, embedding‐based semantic similarity), and detailed HC3 corpus statistics (including per-domain question and answer counts), are in the supplementary material.

\begin{table*}[t]
\centering
\caption{Statistics of all 8 datasets. Avg. D/Q denotes the average number of relevant documents per query. H-Q/L-Q and H-D/L-D represent human/LLM-written queries and documents, respectively. L-D values are reported under different temperature settings (0.2, 0.7, 1.0). All length statistics are measured in words.}
\label{tab:dataset_stat}
\renewcommand{\arraystretch}{0.9}
\begin{tabular*}{0.9\textwidth}{l @{\extracolsep{\fill}} rrr rrrr rr} 
\toprule
\multirow{3}{*}{Dataset} & \multirow{3}{*}{\# Query} & \multirow{3}{*}{\# Corpus} & \multirow{3}{*}{Avg. D/Q} 
& \multicolumn{6}{c}{Avg. Word Length} \\
\cmidrule(lr){5-10}
& & & & H-Q & L-Q & H-D & L-D$_{0.2}$ & L-D$_{0.7}$ & L-D$_{1.0}$ \\
\midrule
ANTIQUE         & 2,426 & 403,666 & 11.3 & 9.2  & 10.0 & 40.2 & 41.2 & 42.0 & 43.1 \\
FiQA-2018       & 648   & 57,450  & 2.6  & 10.8 & 12.7 & 133.2 & 105.6 & 106.0 & 106.8 \\
\midrule
MS MARCO        & 6,979  & 542,203 & 1.1  & 5.9  & 6.9  & 58.1 & 57.3 & 58.1 & 59.1 \\
NQ              & 3,446  & 104,194 & 1.2  & 9.2  & 9.9  & 86.9 & 83.4 & 84.0 & 84.9 \\
HotpotQA        & 7,405  & 111,107 & 2.0  & 15.7 & 16.0 & 67.9 & 69.4 & 69.8 & 70.6 \\
FEVER           & 6,666  & 114,529 & 1.2  & 8.5  & 9.2  & 113.4 & 89.6 & 89.6 & 89.9 \\
Climate-FEVER   & 1,535  & 101,339 & 3.0  & 20.2 & 20.3 & 99.4 & 81.3 & 81.4 & 81.9 \\
\midrule
SCIDOCS         & 1,000  & 25,259  & 4.7  & 9.4  & 10.1 & 169.7 & 144.9 & 144.2 & 143.7 \\
\bottomrule
\end{tabular*}
\end{table*}

\section{Linguistic Distribution Analysis}
In term-based retrieval models, both queries and documents are represented as vectors of vocabulary size, with each dimension indicating a term weight. Matching is measured between the query and document term distributions. To understand the impact of LLM-generated texts on term-based retrievers, we analyze how their linguistic distributions differ from human-authored texts from three perspectives: Zipf's law, term specificity, and expression diversity. As a preliminary step, we introduce Zipf's law and interpret it based on the principle of least effort~\cite{cancho2003least, zipf2016human}.

\subsection{Zipf's Law}

\begin{figure}[t]
    \centering
    \includegraphics[width=0.93\linewidth]{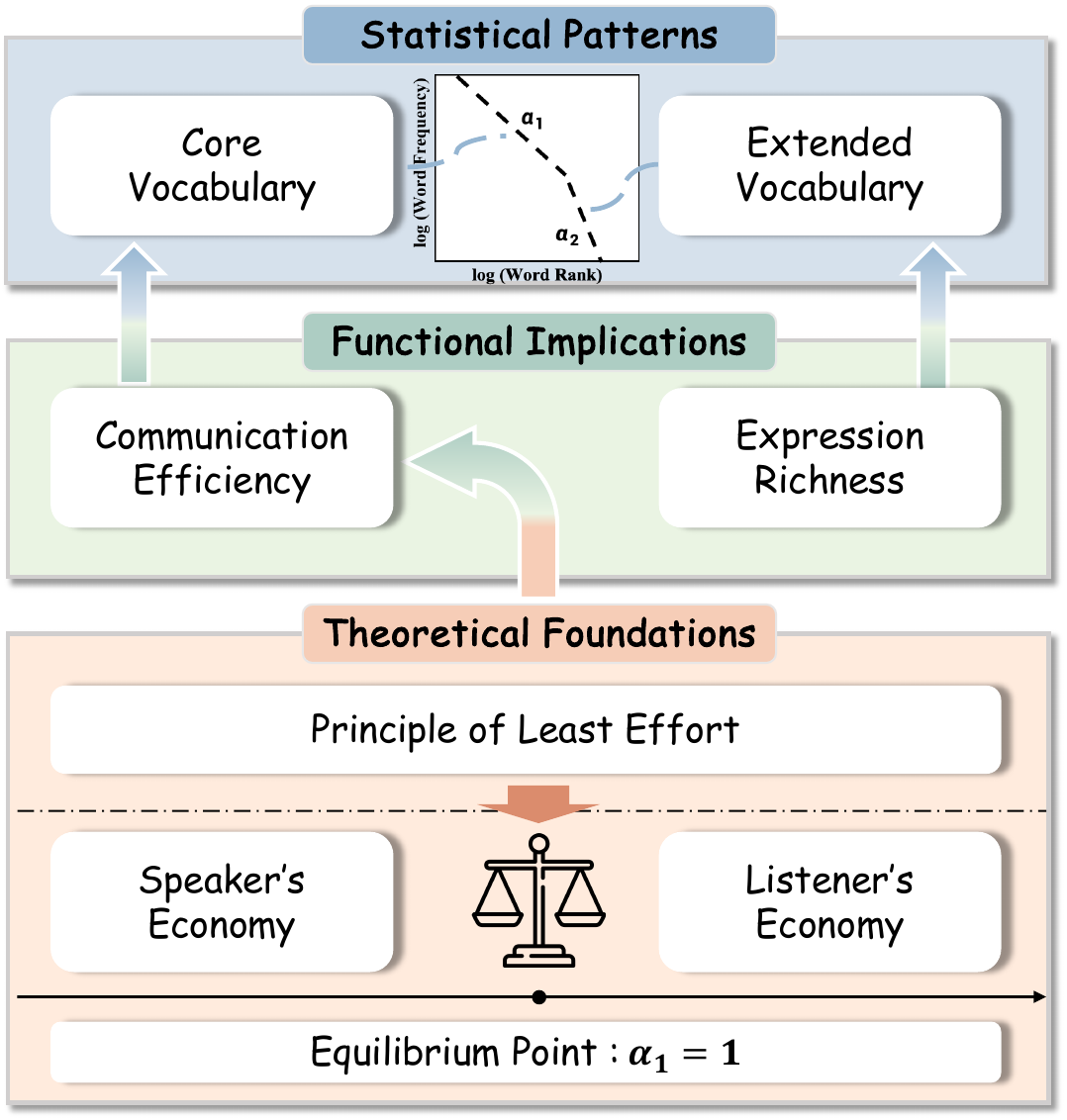}
    \caption{Multi-level interpretation of Zipf's law: from statistical patterns in core and extended vocabulary, to functional implications for communication efficiency and expression richness, to the principle of least effort as a balance of speaker's and listener's economy. }
    \label{fig:zipf_summarization}
\end{figure}

Zipf's Law \cite{newman2005power,adamic2002zipf} is an empirical law observed in natural languages, suggesting word frequency is inversely proportional to its rank.
While prior studies on LLM-generated texts have demonstrated that they adhere to Zipf's Law~\cite{de2024human,wang2024llm}, further interpretations and explanations about the underlying mechanisms are insufficient. We study the traits of LLM-generated texts by examining their adherence to Zipf's Law from three perspectives: statistical patterns, functional implications, and theoretical foundations, as shown in Figure~\ref{fig:zipf_summarization}. Next, we briefly introduce each of them.

\paratitle{Double-Regime Distribution}
When analyzing word frequency against rank on a log-log scale, natural language exhibits a distinctive double-regime pattern~\cite{ferrer2001two}:

\begin{equation}
\!\!\!\log f(r) \propto 
\begin{cases}
-\alpha_1 \log r & \text{for } r \leq r_c \text{ (core vocab.)}, \\
-\alpha_2 \log r & \text{for } r > r_c \text{ (extended vocab.)},
\end{cases}
\end{equation}
where $f(r)$ is the frequency of word with rank $r$, $r_c$ is the transition rank separating core and extended vocabulary, and $\alpha_1$ and $\alpha_2$ are the power-law exponents for the respective vocabulary segments.

\paratitle{Communication Efficiency and Expression Richness}
These two segments reflect distinct aspects of language use. The regime characterized by \(\alpha_1\) corresponds to the core vocabulary, consisting of high-frequency words that facilitate quick and efficient exchange of basic ideas. In contrast, the second regime, defined by \(\alpha_2\), represents the extended vocabulary, comprising low-frequency words that enhance expression richness by enabling nuanced, precise, and context-dependent communication ~\cite{ferrer2001two}.

\paratitle{Principle of Least Effort}
The distribution in the core vocabulary is fundamentally governed by the principle of least effort~\cite{cancho2003least, zipf2016human}, which balances two competing forces:

\textsf{Speaker's Economy}: The tendency to minimize expression effort, favoring concentrated word usage, corresponding to a larger $\alpha_1$.

\textsf{Listener's Economy}: The need for clarity and precision, favoring distributed word usage, which is associated with a smaller $\alpha_1$.

In information theory, $\alpha_1 = 1$ represents the theoretical equilibrium between these competing forces~\cite{cancho2003least}. A smaller $\alpha_1$ indicates a preference for the listener's economy through more distributed word usage, while a larger $\alpha_1$ favors the speaker's economy. 

Unlike $\alpha_1$, the parameter $\alpha_2$ in extended vocabulary is not governed by the principle of least effort. Instead, it primarily reflects semantic expressiveness in contexts where precision outweighs efficiency. A smaller $\alpha_2$ indicates richer semantic expression through diverse vocabulary, while a larger value suggests the opposite.

Together, these Zipfian parameters reflect characteristic properties of language production and serve as key metrics to compare human and LLM-generated texts.

\subsection{Zipfian Distribution}
\label{sec:zipfian_distribution}
To empirically validate Zipfian characteristics in LLM-generated texts,we analyze word frequency distributions across nine datasets, including eight LLM-rewritten corpora and the naturally generated HC3-English corpus. For each dataset, we compute word frequencies and ranks separately for human-written and LLM-generated texts. We set the transition rank $r_c = 2000$, motivated by linguistic research showing that 2000-3000 words suffice for basic communication~\cite{nation2006large, adolphs2003lexical}. This choice is validated by our experimental results, showing excellent linear fit in log-log space (R² > 0.99, explaining over 99\% of the data variation) for most datasets. The complete Zipfian parameters for all datasets are presented in Table~\ref{tab:zipf_analysis}.

\begin{table}[t]
\centering
\caption{Zipf-law slopes for human and Llama 3 texts ($\alpha_{1}$: core vocabulary, $\alpha_{2}$: extended vocabulary).
An upward arrow (\poschange{$\uparrow$}) indicates the value for Llama 3 is larger than for the human corpus, while a downward arrow (\negchange{$\downarrow$}) indicates it is smaller.}
\label{tab:zipf_analysis}
\renewcommand{\arraystretch}{0.9}
\begin{tabular}{lcccc}
\toprule
\multirow{2}{*}{Dataset} & \multicolumn{2}{c}{Human} & \multicolumn{2}{c}{Llama 3} \\
\cmidrule(lr){2-3} \cmidrule(lr){4-5}
& $\alpha_1$ & $\alpha_2$ & $\alpha_1$ & $\alpha_2$ \\
\midrule
ANTIQUE & 1.0643 & 1.6479 & 0.9516 \negchange{$\downarrow$} & 1.8511 \poschange{$\uparrow$} \\
FiQA-2018 & 1.0726 & 1.7473 & 0.9766 \negchange{$\downarrow$} & 1.9049 \poschange{$\uparrow$} \\
\midrule
MS MARCO & 0.8873 & 1.6531 & 0.8454 \negchange{$\downarrow$} & 1.7722 \poschange{$\uparrow$} \\
NQ & 0.8657 & 1.4902 & 0.8278 \negchange{$\downarrow$} & 1.5493 \poschange{$\uparrow$} \\
HotpotQA & 0.9211 & 1.4231 & 0.8964 \negchange{$\downarrow$} & 1.4793 \poschange{$\uparrow$} \\
FEVER & 0.9082 & 1.4526 & 0.8892 \negchange{$\downarrow$} & 1.4969 \poschange{$\uparrow$} \\
Climate-FEVER & 0.9037 & 1.4249 & 0.8881 \negchange{$\downarrow$} & 1.4797 \poschange{$\uparrow$} \\
\midrule
SCIDOCS & 0.8852 & 1.7272 & 0.8871 \poschange{$\uparrow$} & 1.7925 \poschange{$\uparrow$} \\\midrule
HC3 & 1.0215 & 1.7003 & 1.0140 \negchange{$\downarrow$} & 1.7889 \poschange{$\uparrow$} \\
\bottomrule
\end{tabular}
\end{table}

\begin{table}[t]
\centering
\caption{Zipf-law slopes for Llama 2 and Qwen2.5. Arrows indicate the change relative to the corresponding human values reported in Table~\ref{tab:zipf_analysis} (\poschange{$\uparrow$} increase, \negchange{$\downarrow$} decrease)}
\label{tab:zipf_cross_model}
\renewcommand{\arraystretch}{0.9}
\begin{tabular}{lcccc}
\toprule
\multirow{2}{*}{Dataset} & \multicolumn{2}{c}{Llama 2} & \multicolumn{2}{c}{Qwen2.5} \\
\cmidrule(lr){2-3} \cmidrule(lr){4-5}
& $\alpha_1$ & $\alpha_2$ & $\alpha_1$ & $\alpha_2$ \\
\midrule
ANTIQUE & 1.0378 \negchange{$\downarrow$} & 1.9683 \poschange{$\uparrow$} & 0.9233 \negchange{$\downarrow$} & 1.9183 \poschange{$\uparrow$} \\

MS MARCO & 0.8622 \negchange{$\downarrow$}  & 1.7670 \poschange{$\uparrow$} & 0.8359 \negchange{$\downarrow$} & 1.7885 \poschange{$\uparrow$} \\
SCIDOCS & 0.9076 \poschange{$\uparrow$} & 1.8071 \poschange{$\uparrow$} & 0.8937 \poschange{$\uparrow$} & 1.8082 \poschange{$\uparrow$} \\
\bottomrule
\end{tabular}
\end{table}

\paratitle{Principle of Least Effort in Human Texts}
Before comparing human and LLM texts, we verify that human corpus themselves follow the principle of least effort, which predicts an equilibrium of $\alpha_{1}=1$ between speaker and listener economy~\cite{cancho2003least}. The results align well with this prediction, as is approximately 1 across most of the corpus. Specifically, informal forums (FiQA, ANTIQUE) show $\alpha_1$ values above 1, favoring the speaker's economy, whereas more formal sources show $\alpha_{1}<1$, emphasizing listener economy. This provides a theoretical background for interpreting the shifts we observe in LLM outputs.

\paratitle{Core Vocabulary Analysis ($\alpha_1$)}
Texts generated by LLMs show a consistently smaller $\alpha_1$ compared to human-written texts in eight out of nine datasets, indicating a more uniform distribution of core vocabulary words. This shift toward listeners' economy likely results from RLHF training~\cite{DBLP:conf/nips/ChristianoLBMLA17, DBLP:conf/nips/Ouyang0JAWMZASR22}, where LLMs are optimized to generate more explicit and detailed content, prioritizing clarity over conciseness. SCIDOCS presents the only exception, where LLM-generated texts show a slightly larger $\alpha_1$, likely because scientific articles naturally favor precise expression.

\paratitle{Extended Vocabulary Analysis ($\alpha_2$)}
In contrast, LLM-generated texts show consistently larger $\alpha_2$ values in the extended vocabulary across all datasets, indicating reduced diversity in specialized and rare word usage. This pattern likely stems from neural networks' inherent limitations in modeling low-frequency words due to insufficient learning.

Concrete examples of these patterns appear in the word‐usage shifts. In the core vocabulary, LLMs systematically replace casual expressions with more formal alternatives, such as substituting ``important'' with ``significant'', ``mostly'' with ``typically''. They also increase the use of discourse markers like ``additionally'' and ``furthermore'', while reducing the frequency of specialized terms (e.g., ``catalyzes'', ``noncoding'') and emotional language (e.g., ``humiliated'').  These lexical patterns further demonstrate LLMs' orientation toward the listener's economy through more diverse core vocabulary while remaining conservative in extended vocabulary usage. The full set of terms exhibiting the largest frequency shifts is available in the supplementary material.

\paratitle{Robustness of the Zipfian Patterns}
To make sure that the Zipf–law shifts we observe are not the result of a
particular generation setting, we conduct two complementary robustness
checks on three representative datasets,
ANTIQUE (informal forum), MS MARCO (web search), and SCIDOCS (scientific papers).

\emph{Sampling temperature.}
For each dataset we regenerate the corpora with three temperatures ($T{=}0.2,\,0.7,\,1.0$).  Figure~\ref{fig:zipf_comparison_temperature}
shows that, for all three collections, LLM texts keep a smaller $\alpha_{1}$ and a larger $\alpha_{2}$ than their human counterparts, regardless of temperature.  Hence the effect cannot be attributed to a specific sampling choice.

\emph{Model family.}
We then set the temperature to $T{=}0.2$ and regenerate the same three datasets with two additional models, Llama-2-7B-Chat~\cite{touvron2023llama} and
Qwen2.5-7B-Instruct~\cite{qwen2}. Table~\ref{tab:zipf_cross_model} reports the  change in $(\alpha_{1},\alpha_{2})$ relative to human text.  Despite differences in architecture and training data, all models reproduce the same
trend: $\alpha_{1}$ decreases (or marginally increases for SCIDOCS) while $\alpha_{2}$ increases. This confirms that the
Zipfian shift is architecture-agnostic.

Together, these two checks demonstrate that the core/extended-vocabulary patterns are stable across both temperature settings
and model families on datasets with very different writing styles.

\begin{figure}[t]
    \centering
    \includegraphics[width=0.95\linewidth]{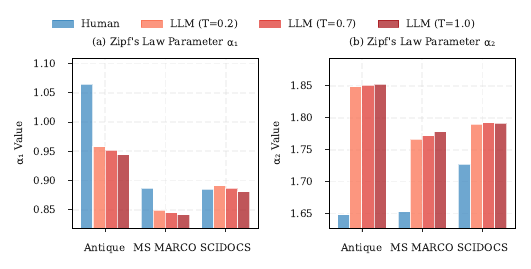}
    \caption{
        Impact of sampling temperature on the two Zipf’s-law exponents. (a) Core-vocabulary slope $\alpha_{1}$; (b) extended-vocabulary slope $\alpha_{2}$.
    }
    \label{fig:zipf_comparison_temperature}
\end{figure}

\paratitle{Zipfian Analysis Summary} 
Our analysis reveals distinct Zipfian patterns that differentiate LLM-generated texts from those written by humans. Specifically, LLMs exhibit smaller $\alpha_1$ values in core vocabulary, showing a preference for listener economy through diverse expressions. Conversely, larger $\alpha_2$ values in extended vocabulary reflect limited specialized term usage. These findings align with the fact that LLMs prioritize reader comprehension through RLHF training while showing limitations in modeling rare words. The consistency across all nine datasets, including both rewritten and naturally generated AIGC, highlights the universality.

\subsection{Term Specificity}

To measure how LLMs alter vocabulary specificity patterns, we analyze the Inverse Document Frequency (IDF) distributions. IDF is a standard term specificity measure~\cite{sparck1972statistical}, where higher values indicate terms appearing in fewer documents and thus carrying more specific information. We compute IDF using a smoothed formula:

\begin{equation}
 \text{IDF} = \text{log}\left(1 + \frac{N-n+0.5}{n+0.5}\right),   
\end{equation}
where $N$ is the total document count and $n$ is the number of documents containing the term. For fair comparison, we use human-written corpus as the reference to compute IDF values for both human and LLM texts. Following our Zipf analysis, we examine core ($r \leq r_c$) and extended ($r > r_c$) vocabulary segments separately.

Our analysis reveals contrasting patterns in different vocabulary segments across all datasets. In the core vocabulary, LLM-generated texts show higher average IDF values than human-written texts, indicating more specific term usage and aligning with our earlier observation of LLMs favoring listener's economy. In the extended vocabulary, however, LLM texts show lower IDF values, suggesting a more limited range of specialized terms. Figure~\ref{fig:idf_distributions} illustrates these patterns using the ANTIQUE dataset as a representative example.

While Zipf analysis reveals term usage patterns through frequency distributions, IDF analysis confirms them through corpus-level term specificity. Both analyses demonstrate the same conclusion: LLMs use more specific terms in core vocabulary but remain conservative in extended vocabulary.

\begin{figure}[t]
    \centering
    \includegraphics[width=0.95\linewidth]{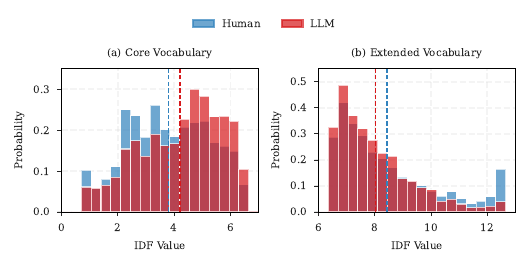}
    \caption{
        IDF distributions in ANTIQUE. (a) Core vocabulary and (b) extended vocabulary. Histograms represent distributions (blue: Human, red: LLM); vertical dashed lines show respective mean values.
    }
    \label{fig:idf_distributions}
\end{figure}
\subsection{Expression Diversity}

After examining corpus-level term distributions through Zipf's law and IDF analysis, we now focus on more fine-grained diversity patterns at document and synonym levels. These metrics directly capture how different content sources make specific word choices, which ultimately affects term matching in retrieval scenarios.

\begin{figure}[t]
    \centering
    \includegraphics[width=0.87\linewidth]{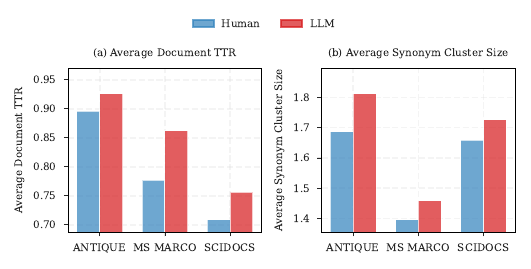}
    \caption{
       Human vs. LLM lexical diversity. (a) Average document-level TTR values. (b) Average WordNet-based synonym cluster sizes.
    }
    \label{fig:diversity_comparison}
\end{figure}

\paratitle{Document-level Diversity}
We first analyze vocabulary diversity within individual documents using Type-Token Ratio (TTR)~\cite{johnson1944studies, templin1957certain}. TTR directly quantifies a writer's tendency to reuse vocabulary versus introduce new terms. A higher TTR indicates greater lexical diversity, potentially reflecting more precise word choices. TTR computes the ratio of unique words to total words in a text:

\begin{equation}
\text{TTR} = \frac{\text{number of unique words}}{\text{total number of words}} .
\end{equation}

For each datasets, we compute TTR values for both human-written and LLM-generated texts. Figure~\ref{fig:diversity_comparison}(a) shows representative results from three formality categories, demonstrating that LLM-generated texts consistently have higher TTR values across all datasets. Beyond this measure of vocabulary diversity, we investigate how semantically related terms are distributed. To examine this deeper aspect of vocabulary usage, we turn to synonym analysis.

\paratitle{Synonym Diversity}
Using WordNet~\cite{miller1995wordnet}, we measure synonym cluster sizes across different sources. A cluster size indicates how many different synonyms are used to express the same concept - when ``small'' and ``tiny'' are used to describe size, the cluster size would be two. Figure~\ref{fig:diversity_comparison}(b) shows that LLM-generated texts consistently exhibit larger synonym clusters. 

To better understand how LLMs diversify their vocabulary compared to human writers, we conducted a detailed analysis of the ``important'' synonym cluster. We selected this word because it represents a common concept with multiple potential synonyms, making it ideal for studying synonym diversification strategies.

\begin{figure}[t]
    \centering
    \includegraphics[width=0.87\linewidth]{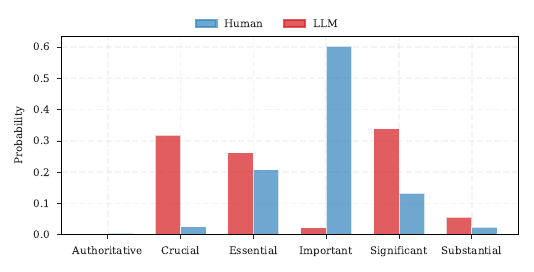}
    \caption{
        Distribution of word usage in the ``important'' synonym cluster for human-written and LLM-generated texts.
    }
    \label{fig:important_synonym_cluster}
\end{figure}

Figure~\ref{fig:important_synonym_cluster} shows the distribution of terms within the "important" synonym cluster. As shown in the figure, human writers tend to default to the general term "important", while LLMs utilize a broader range of contextually more specific alternatives such as "authoritative", "crucial", and "significant". This pattern is consistent with our main findings that LLM-generated texts show higher lexical diversity and more varied synonym usage.

These analyses at both document and synonym levels reveal consistent patterns: LLMs favor lexical diversity over repetition, and specific terms over general ones. This preference for varied and precise expressions further demonstrates LLMs' orientation toward listener's needs: while humans tend to economize their effort through term repetition and general vocabulary, LLMs appear to emphasize clarity and precision for readers, even at the cost of using a more complex vocabulary set. These lexical diversity patterns align with our earlier finding of smoother Zipf slopes in core vocabulary, providing evidence that LLM texts prioritize listener's economy over speaker's economy.

These systematic differences in vocabulary usage result in two competing effects in retrieval: (1) a more diverse vocabulary increases the probability of matching query terms through various expressions, while (2) it also disperses term frequencies across synonyms, lowering matching scores when exact matches are present. 

Given these competing effects, retrieval outcomes cannot be predicted by analyzing document characteristics in isolation. Our analyses have revealed systematic differences in term distributions between human-written and LLM-generated texts, thus the critical question becomes: how do these term distribution differences between sources affect ranking outcomes in actual retrieval scenarios? In the next section, we address this through a theoretical framework based on distribution alignment, demonstrating how the degree of match between query and document term distributions explains apparent source preferences in ranking results.
\section{Distribution Alignment in Retrieval}

We introduce a theoretical framework centered on the principle of distribution alignment to explain how systematic differences in term distributions translate into ranking results. Under this framework, a retrieval model’s apparent preference for a particular source is not an inherent bias but a direct consequence of how closely the term distribution of a document aligns with that of the query.

To formalize this idea, we begin by specifying the notation and defining the term-based retrieval models under investigation.

\subsection{Notation and Preliminaries}

\subsubsection{Notation}
Let $\text{tf}(w,d)$, $\text{df}(w)$, $|d|$, and $\text{avgdl}$ denote the frequency of term $w$ in document $d$, the document frequency of $w$, the length of $d$, and the average document length in the collection, respectively. Let $V$ be the vocabulary and $N$ be the total number of documents. For probabilistic analysis, $P(w|d)$ denotes the term probability given a document model.

\subsubsection{Term-Based Retrieval Methods}
We analyze four widely-used term-based retrieval methods.

\paratitle{TF–IDF} first converts the query and the document into TF–IDF
vectors and then measures their similarity with the cosine function.  
For each term \(w\) we define its TF–IDF weight in a text as:  
\begin{equation}
    t_{x}(w)=\text{tf}(w,x) \cdot \log\frac{N}{\text{df}(w)}, \ \ x\in\{q,d\}.
\end{equation} 

The relevance score of document \(d\) to query \(q\) is the cosine
between their TF–IDF vectors:
\begin{equation}
    \text{Score}_{\text{TF-IDF}}(q,d)=
    \frac{\sum_{w\in V} t_{q}(w)\,t_{d}(w)}
         {
          \sqrt{\sum_{w\in V} t_{q}(w)^{2}}\;
          \sqrt{\sum_{w\in V} t_{d}(w)^{2}} } ,
    \label{eq:tfidf}
\end{equation}

\paratitle{BM25} combines inverse document frequency with a
term frequency component that saturates with increasing count:
\begin{align}
    \text{Score}_{\text{BM25}}(q,d) = \sum_{w \in q} & \log \frac{N-\text{df}(w)+0.5}{\text{df}(w)+0.5}  \nonumber \\
    & \cdot \frac{(k_1 + 1) \cdot\text{tf}(w,d)}{k_1((1-b)+b\frac{|d|}{\text{avgdl}})+\text{tf}(w,d)} .
\label{eq:bm25}
\end{align}

\paratitle{Query Likelihood (QL)} computes the retrieval score using document generation probability:
\begin{equation}
    \text{Score}_{\text{QL}}(q,d) = \sum_{w \in q} \log P(w|d) .
\end{equation}

\paratitle{Divergence From Randomness (DFR)} models the term weight as a product of two information gains: $-log P_1(w|d)$ measures the randomness of term frequency, and $(1-P_2(w|d))$ normalizes the information gain using the term occurrence in the elite set:
\begin{equation}
    \text{Score}_{\text{DFR}}(q,d) = \sum_{w \in q} -\log(P_1(w|d)) \cdot (1-P_2(w|d)).
\end{equation}

\subsection{Distribution Alignment Theorem}

\subsubsection{Proof of the Distribution Alignment Theorem}
\label{sec:theorem_proof}

We first state and prove the theorem for the Query–Likelihood (QL) model, whose score is written purely in probabilistic terms, allowing a concise and transparent proof that illustrates the main idea.

Let $Q$ and $D$ be the sets of queries and documents. We denote a query by $q \in Q$ and a document by $d \in D$. $P_Q(w)$ denotes the marginal probability of term $w$, i.e., its expected occurrence in a query randomly drawn from $Q$. Similarly, $P_D(w)$ denotes the marginal probability of term $w$ in the document collection. With these definitions, we derive the following theorem.

\begin{theorem}
\label{thm:distribuiton_alignment}
For the Query Likelihood method, the expected retrieval score of an arbitrary $q$ and $d$, i.e., $E_{q\in Q,d\in D}[\text{Score}_{\text{QL}}(q,d)]$ has a theoretical upper bound, which is achievable when the document term distribution $P_D$ equals the query term distribution $P_Q$.
\end{theorem}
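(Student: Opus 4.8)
The plan is to reduce the expected QL score to a negative cross-entropy between the query and document term distributions, after which the bound follows from Gibbs' inequality. First I would expand the score definition. Writing $\text{Score}_{\text{QL}}(q,d)=\sum_{w\in q}\log P(w\mid d)=\sum_{w\in V}\text{tf}(w,q)\,\log P(w\mid d)$ and using that $q$ and $d$ are drawn independently, linearity of expectation gives
\[
  E_{q,d}\bigl[\text{Score}_{\text{QL}}(q,d)\bigr]
     =\sum_{w\in V} E_q[\text{tf}(w,q)]\,E_d\bigl[\log P(w\mid d)\bigr].
\]
By the definition $P_Q(w)=E_q[\text{tf}(w,q)]$ (the fixed expected query length merely rescales the bound), the right-hand side is a $P_Q$-weighted average of the per-document log-probabilities.

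The second step passes from the per-document model $P(\cdot\mid d)$ to the collection marginal $P_D$. Since $\log$ is concave, Jensen's inequality yields $E_d[\log P(w\mid d)]\le\log E_d[P(w\mid d)]=\log P_D(w)$, and therefore
\[
  E_{q,d}\bigl[\text{Score}_{\text{QL}}\bigr]\le\sum_{w\in V}P_Q(w)\,\log P_D(w),
\]
which is exactly the negative cross-entropy between $P_Q$ and $P_D$.

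The heart of the argument is Gibbs' inequality (equivalently, non-negativity of the Kullback–Leibler divergence $\mathrm{KL}(P_Q\,\|\,P_D)\ge 0$): for any two distributions over $V$,
\[
  \sum_{w\in V}P_Q(w)\,\log P_D(w)\le\sum_{w\in V}P_Q(w)\,\log P_Q(w),
\]
with equality if and only if $P_D=P_Q$. This delivers the claimed upper bound $\sum_w P_Q(w)\log P_Q(w)$ and identifies when it is attained. I would finish by verifying the joint equality condition: the overall bound is tight only when both the Jensen step is tight (all documents share one common term model) and the Gibbs step is tight (that common model equals $P_Q$); both hold simultaneously exactly when each document realizes $P(\cdot\mid d)=P_Q$, which in turn forces the aggregate $P_D=P_Q$, consistent with the statement.

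The main obstacle is the document expectation, because $P(w\mid d)$ is a per-document quantity while $P_D$ is the aggregate marginal: the Jensen gap must be introduced explicitly and its equality case tracked together with the Gibbs equality case, so that the single stated condition $P_D=P_Q$ indeed achieves the bound. A secondary technicality is finiteness of $\log P(w\mid d)$, which requires the standard QL smoothing (Dirichlet or Jelinek–Mercer) guaranteeing $P(w\mid d)>0$ for every $w$; I would state this as a standing assumption. Once QL is settled, I expect the same distribution-alignment conclusion to transport to TF–IDF, BM25, and DFR by rewriting each score in terms of its $\text{tf}$ and $\text{df}$ statistics and matching them against $P_Q$ and $P_D$.
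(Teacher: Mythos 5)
Your proof follows essentially the same route as the paper's: expand the score over the vocabulary, factor the expectation by query--document independence, apply Jensen's inequality to pass from $E_d[\log P(w\mid d)]$ to $\log P_D(w)$, and conclude via non-negativity of $\mathrm{KL}(P_Q\|P_D)$ that the bound $\sum_w P_Q(w)\log P_Q(w)$ is attained at $P_D=P_Q$. Your explicit tracking of the joint equality conditions (Jensen tightness plus Gibbs tightness) and the standing smoothing assumption ensuring $P(w\mid d)>0$ is somewhat more careful than the paper's treatment, which only remarks informally that the Jensen gap vanishes when document-level term probabilities are similar, but the argument is the same.
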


\begin{proof}
We analyze the expected QL score by taking the expectation over all queries $q \in Q$ and documents $d \in D$: 
\begin{align}
E_{q,d} & [\text{Score}_{\text{QL}}(q,d)] \nonumber \\
&= E_{q,d}[\sum_{w \in q} \log P(w|d)] \nonumber \\
& = E_{q,d}[\sum_{w} \mathbb{I}(w \in q) \log P(w|d)] \nonumber \\
&= \sum_w E_q[\mathbb{I}(w \in q)] \cdot E_d[\log P(w|d)] \quad \text{(Independence)}\nonumber \\
&= \sum_w P_Q(w) \cdot E_d[\log P(w|d)] \quad \text{(Marginal distribution)} \nonumber \\
&\leq \sum_w P_Q(w) \cdot \log E_d[P(w|d)] \quad \text{(Jensen's inequality)} \nonumber \\
&= \sum_w P_Q(w) \cdot \log P_D(w) \nonumber \\ 
&= -\text{KL}(P_Q||P_D) + \sum_w P_Q(w) \cdot \log P_Q(w),
\label{eqn:kl_dive}
\end{align}
where $\mathbb{I}(\cdot)$ is the indicator function. The only source of looseness in the bound comes from Jensen’s inequality. When the document-level term probabilities are similar, this gap disappears. In most cases, the bound is tight.
\end{proof}

\paratitle{Practical implication to source preference}
For two corpus $D$ and $D'$, assume that the query distribution is
closer to $D$ in terms of KL divergence. According to Theorem~\ref{thm:distribuiton_alignment}: 
\begin{equation}
\text{KL}(P_Q||P_D) < \text{KL}(P_Q||P_{D^{'}}) \Rightarrow
  \mathbb{E}_{q,d\in D}[\mathrm{Score}] > \mathbb{E}_{q,d\in D'}[\mathrm{Score}]. \nonumber
\end{equation}
Because ranking is monotonic in the score, documents from $D$ will, on
average, be ranked ahead of those from $D'$.  The system thus
appears to prefer $D$, even though the scoring rule itself is
source-agnostic; the preference simply reflects the tighter
distributional match between $P_D$ and the query distribution~$P_Q$.

\subsubsection{Extension to Other Models}
\label{sec:conceptual_extension}
Having shown the distribution alignment principle through the QL model, we extend it to other term-based models. We provide complete mathematical proofs for other three classical
models (TF–IDF, BM25, DFR) in our supplementary material. Below we sketch the key ideas.

\paratitle{TF-IDF}
As the classical model in the vector-space family, TF–IDF scores a document by the inner product between the idf-weighted query and document vectors. Intuitively, when the corpus distribution aligns with the query distribution, the same terms receive large weights in both vectors, so their largest components lie on the same coordinates and the geometric similarity increases.  
Formally, applying the Cauchy–Schwarz inequality to the expected inner product shows that the upper bound is reached only when the two expectation vectors are colinear, i.e., when $P_D = P_Q$.  

\paratitle{BM25 and DFR} Intuitively, BM25 and DFR still reward documents that contain many query terms, but they diminish the marginal gain of repeated occurrences through a saturating, strictly concave TF–gain (e.g.\ $(k_{1}+1)t/(k_{1}+t)$). This concavity is the key property that allows us to extend the
alignment theorem.  The proof proceeds in three conceptually simple steps:

\begin{enumerate}[leftmargin=12pt,itemsep=2pt]
\item \textbf{Expectation step.}  
      Take the expectation of the per–query score over random queries and documents, exactly as we did for QL.  

\item \textbf{Jensen step.}  
      Apply Jensen’s inequality to move the concave TF–gain outside the document expectation, yielding a concave functional $F(P_D)$ defined on the probability simplex.  

\item \textbf{Optimisation step.}  
      Maximize $F(P_D)$ with a Lagrange multiplier (KKT conditions), the concavity guarantees a unique optimum. Algebraically, it has a ``water-filling'' form that contains a tiny
      length–normalisation offset of order $O(1/L_d)$ and therefore vanishes for realistic collections, whose average document length $L_d$ is typically large. Consequently, the optimal distribution is practically identical to $P_Q$.
\end{enumerate}

Thus, despite their heuristic look, BM25 and DFR ultimately obey the same distribution‐alignment principle established for QL and symmetric TF–IDF: documents whose term distributions are closer to
the query’s distribution receive higher expected scores and therefore tend to be ranked higher.  

Therefore, all these models are designed to favour documents whose lexical distribution is closely aligned with that of the query. We also empirically validate this unifying principle in Section \ref{sec:experimental_results}.
\subsection{Source Preference Metrics}
\label{sec:position_base_metrics}

Previous work measured retrieval models' preference between human-written and LLM-generated content using relative differences in traditional IR metrics, e.g., NDCG~\cite{dai2024neural,dai-etal-2024-cocktail}. However, these metrics are not ideal for this task as they confound a model's preference for a document's source with its judgment of a document's relevance. This can lead to misleading conclusions, particularly when relevance labels are incomplete or when irrelevant documents from one source rank higher than relevant ones from another.

To illustrate this misalignment, consider the example in Table \ref{tab:preference_misalignment}.In this example, the ranking shows a clear preference for human-written documents, placing them in positions 1, 2, 5, 6, 8, and 10 (6 out of 10 positions). However, when calculating NDCG based on available relevance judgments, the LLM-generated documents would receive a higher score because they occupy positions 3 and 4 with relevant judgments, while many human documents either lack judgments (a common scenario in real-world IR evaluation) or are judged irrelevant. This misleadingly suggests a preference for LLM content despite the actual ranking favoring human documents.

\begin{table}[t]
\footnotesize
\centering
\setlength{\tabcolsep}{4pt}
\caption{Example ranking scenario demonstrating misalignment between source distribution in ranking results and relevance-based measurement which uses Discounted Cumulative Gain (DCG) calculations. Despite humans occupying 6 of 10 positions, relevance-based metrics misleadingly suggest an LLM advantage. Blue cells indicate Human contribution, red cells indicate LLM contribution.}
\label{tab:preference_misalignment}
\begin{tabular}{ccccc}
\toprule
Rank & Source & Relevance & Source-based & Relevance-based \\
\midrule
1 & Human & Unknown & \posnum{+1} & 0 \\
2 & Human & Unknown & \posnum{+1} & 0 \\
3 & LLM & Relevant & \negnum{+1} & \negnum{+0.50} \\
4 & LLM & Relevant & \negnum{+1} & \negnum{+0.43} \\
5 & Human & Irrelevant & \posnum{+1} & 0 \\
6 & Human & Relevant & \posnum{+1} & \posnum{+0.36} \\
7 & LLM & Unknown & \negnum{+1} & 0 \\
8 & Human & Relevant & \posnum{+1} & \posnum{+0.32} \\
9 & LLM & Irrelevant & \negnum{+1} & 0 \\
10 & Human & Irrelevant & \posnum{+1} & 0 \\
\cmidrule(lr){1-5}
\multicolumn{3}{c}{Total contribution} & Human: 6, LLM: 4 & Human: 0.68, LLM: 0.93 \\
\cmidrule(lr){1-5}
\multicolumn{3}{c}{Conclusion} & \posnum{Human advantage} & \negnum{LLM advantage} \\
\bottomrule
\end{tabular}
\end{table}

To focus only on the preference of a document source, we propose three metrics that do not rely on relevance labels:

\paratitle{Source Ratio (SR@k)} measures the proportion of documents from a specific source in top-k results, providing an intuitive measurement similar to precision@k.
\begin{equation}
    \text{SR}_{s}@k = \frac{|\{d_i | \text{source}(d_i) = s, i \leq k\}|}{k},
\end{equation}
where $s \in \{\text{Human}, \text{LLM}\}$ denotes the document source. Similar to precision@k in traditional IR evaluation, SR@k is position-agnostic but offers an intuitive interpretation of source distribution.

\paratitle{Normalized~Discounted~Source~Ratio (NDSR@k)} incorporates position importance by assigning higher weights to top positions, with the same position discounts as NDCG.
\begin{equation}
    \text{NDSR}_{s}@k = \frac{\sum_{i=1}^k w_i \cdot \mathbb{I}(\text{source}(d_i) = s)}{\sum_{i=1}^k w_i} ,
\end{equation}
where $w_i = \frac{1}{\log_2(1+i)}$ assigns higher weights to top positions, and $\mathbb{I}(\cdot)$ is the indicator function. This metric follows the position discount idea from NDCG, ensuring that documents appearing at higher ranks contribute more to the final score.

\paratitle{Mean Average Source Ratio (MASR)} evaluates the entire ranking sequence by accumulating source ratios at each position where a document from a specific source appears, analogous to MAP.
\begin{equation}
    \text{MASR}_{s} = \frac{\sum_{i=1}^n \text{SR}_{s}@i \cdot \mathbb{I}(\text{source}(d_i) = s)}{|\text{docs}_{s}|} ,
\end{equation}
where $\text{SR}_s@i$ is the source ratio at position $i$, and $|\text{docs}_s|$ is the total number of documents from $s$. Analogous to MAP, MASR accumulates source ratios at each position where a document from $s$ appears, providing a comprehensive view of ranking behavior.

We assess retrieval preference by computing the relative difference in the proposed source preference metrics between human-written and LLM-generated content. Positive values indicate a preference for human-written content, while negative values indicate a preference for LLM-generated content.

\begin{table}[t]
\small
\centering
\caption{Source preference results based on MASR across different datasets and retrieval methods (top 200 per query). 
$\Delta$MASR represents the preference metric computed as $\text{MASR}_{\text{human}} - \text{MASR}_{\text{LLM}}$. \textcolor{humanblue}{Blue cells} indicate preference towards human-written documents, while \textcolor{llmred}{red cells} indicate preference towards LLM-generated content. * indicates statistical significance ($p < 0.05$).}
\label{tab:preference_analysis_masar_only}
\renewcommand{\arraystretch}{0.9}
\begin{tabular}{cccccccc}
\toprule
\multirow{2}{*}{Dataset} & \multirow{2}{*}{Query Type} & \multicolumn{4}{c}{$\Delta$MASR} \\
\cmidrule(lr){3-6}
& & TF-IDF & BM25 & QL & DFR \\
\midrule
\multirow{2}{*}{ANTIQUE} & Human & \posnum{0.381*} & \posnum{0.366*} & \posnum{0.419*} & \posnum{0.376*} \\
& LLM   & \posnum{0.018*} & \posnum{0.014*} & \posnum{0.040*} & \posnum{0.008} \\
\cmidrule{2-6}
\multirow{2}{*}{FiQA} & Human & \posnum{0.069*} & \posnum{0.179*} & \posnum{0.196*} & \posnum{0.082*} \\
& LLM   & \negnum{-0.110*} & \posnum{0.012} & \posnum{0.015} & \negnum{-0.099*} \\
\midrule
\multirow{2}{*}{MS MARCO} & Human & \posnum{0.361*} & \posnum{0.346*} & \posnum{0.342*} & \posnum{0.351*} \\
& LLM   & \posnum{0.213*} & \posnum{0.194*} & \posnum{0.140*} & \posnum{0.198*} \\
\cmidrule{2-6}
\multirow{2}{*}{NQ} & Human & \posnum{0.125*} & \posnum{0.157*} & \posnum{0.183*} & \posnum{0.127*} \\
& LLM   & \posnum{0.027*} & \posnum{0.055*} & \posnum{0.056*} & \posnum{0.025*} \\
\cmidrule{2-6}
\multirow{2}{*}{HotpotQA} & Human & \posnum{0.124*} & \posnum{0.118*} & \posnum{0.121*} & \posnum{0.118*} \\
& LLM   & \negnum{-0.015*} & \negnum{-0.016*} & \negnum{-0.042*} & \negnum{-0.022*} \\
\cmidrule{2-6}
\multirow{2}{*}{FEVER} & Human & \posnum{0.068*} & \posnum{0.087*} & \posnum{0.099*} & \posnum{0.065*} \\
& LLM   & \negnum{-0.037*} & \negnum{-0.002} & \negnum{-0.021*} & \negnum{-0.040*} \\
\cmidrule{2-6}
\multirow{2}{*}{Climate-FEVER} & Human & \posnum{0.047*} & \posnum{0.120*} & \posnum{0.191*} & \posnum{0.056*} \\
& LLM   & \negnum{-0.033*} & \posnum{0.040*} & \posnum{0.088*} & \negnum{-0.026*} \\
\midrule
\multirow{2}{*}{SciDocs} & Human & \posnum{0.005} & \posnum{0.042*} & \posnum{0.024*} & \posnum{0.008*} \\
& LLM   & \negnum{-0.073*} & \negnum{-0.035*} & \negnum{-0.076*} & \negnum{-0.073*} \\
\bottomrule
\end{tabular}
\end{table}

\begin{figure}[t]
    \centering
    \includegraphics[width=0.96\linewidth]{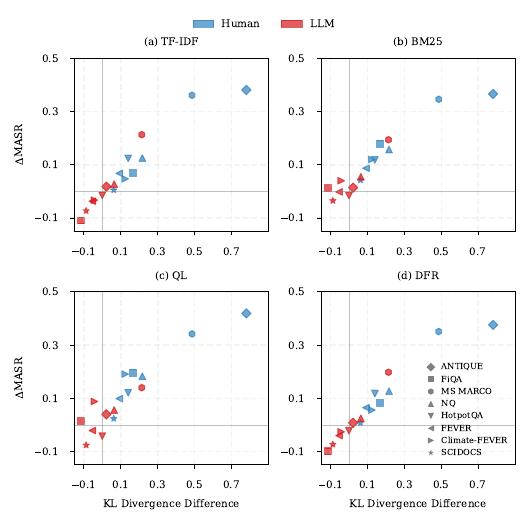}
    \caption{
        KL-divergence differences ($\text{KL}_{\text{LLaMA}} - \text{KL}_{\text{human}}$) versus $\Delta$ MASR for four retrieval methods: (a) TF-IDF (b) BM25, (c) QL, and (d) DFR. Blue and red points represent human and LLM queries respectively, with each point denoting a dataset.
    }
    \label{fig:masr_comparison}
\end{figure}

\subsection{Experimental Validation}
\label{sec:experimental_results}
In this section, we analyze source preferences across query and document sources, and provide empirical validation of our distribution alignment theorem. No stemming or stop-word removal is applied, matching the preprocessing used in our Zipf-based analysis.

\paratitle{Setup} MASR is used as the primary metric because it summarizes preference trends over the entire ranked list. Table~\ref{tab:preference_analysis_masar_only} presents MASR scores, and Figure~\ref{fig:masr_comparison} plots the relationship between the KL gap,  $\Delta\mathrm{KL}$ and MASR for four classic term-based retrievers.

\paratitle{Empirical findings} Table~\ref{tab:preference_analysis_masar_only} shows two key empirical findings:
1) With human-written queries, term-based retrievers consistently prefer human-written documents across datasets, confirming the source bias observed in previous studies~\cite{dai2024neural}. 2) With LLM-generated queries, this preference weakens substantially or even reverses.

\paratitle{Theoretical validation} These patterns can be explained by distributional alignment: Equation~\eqref{eqn:kl_dive} implies that when $\text{KL}(P_Q\|P_D) < \text{KL}(P_Q\|P_{D'})$, documents from $D$ should receive higher scores. Thus, the difference $\Delta\text{KL} = \text{KL}(P_Q\|P_{D'}) - \text{KL}(P_Q\|P_D)$ should correlate positively with retrieval preference.
As shown in Figure~\ref{fig:masr_comparison}, a strong positive correlation exists between $\Delta\mathrm{KL}$ and MASR across all retrieval methods (Pearson coefficients: 0.95, 0.93, 0.96, and 0.95 for BM25, QL, DFR, and TF-IDF, respectively; all with p < 0.001). These results strongly support the distribution alignment theorem,  showing that retrieval preferences are well predicted by distributional alignment between queries and documents.

\paratitle{Additional observations} To further interpret these results, we examine the distribution patterns in Figure~\ref{fig:masr_comparison} in more detail. Human queries span a wider distribution range with stronger preferences for human documents, while LLM queries show compressed ranges with more balanced preferences. This pattern aligns with our linguistic analysis of LLM characteristics. While the alignment principle holds universally, dataset characteristics influence its manifestation. Among the datasets, LLM-generated queries do not consistently favor LLM-generated documents, suggesting that preference may not be determined by the source of the content. For example, MS MARCO maintains a human document preference even with LLM queries, while SciDocs shows dramatic shifts between query types, yet all follow the same alignment principles.

\paratitle{Native-AIGC sanity check} To validate distributional alignment in real-world AIGC, we construct a balanced document collection from HC3-English~\cite{guo-etal-2023-hc3} by sampling one human-written and one ChatGPT-generated answer for each question, then retrieve using original human queries over this mixed-source pool. We observe a consistent preference for human-written answers, with $\Delta\mathrm{MASR}$ averaging $+0.08$ (TF-IDF), $+0.15$ (BM25), $+0.13$ (QL), and $+0.08$ (DFR). The KL divergence from human queries to human answers is smaller than to ChatGPT answers, resulting in a positive $\Delta\mathrm{KL}$. Thus, native AIGC data also align with the distributional alignment principle, supporting the generality of our theorem.

In summary, these consistent alignment patterns across diverse methods and datasets suggest that term-based retrieval models do not possess an inherent source bias; rather, they favor documents whose term distributions better match the query's distribution.

For completeness, we provide additional experiments in the supplementary material, including: (1) results for the other two metrics, which exhibit trends consistent with MASR; (2) analyses using standard preprocessing (stemming and stopword removal); and (3) complete retrieval hyperparameters. All results confirm the robustness and generality of our findings.

\section{Conclusion and Discussion}
\label{sec:conclusion}

This paper presents a systematic investigation of the differences between LLM-generated and human-written texts, and their implications for term-based retrieval systems. Across nine datasets, we find that LLM-generated texts use a more diverse core vocabulary but are more conservative with rare terms—patterns that align with the principle of least effort in language usage. Building on these findings, we theoretically and empirically show that the apparent source bias in term-based retrieval systems is not inherent, but rather a natural consequence of distribution alignment between queries and documents: retrieval models favor documents whose term distributions closely match those of the queries.

These findings have important implications for the field of Information Retrieval and suggest new directions for future research:
\paratitle{Rethinking "Source Bias"}
The term "bias" often carries a negative connotation, implying a flaw within the model. Our work suggests that for term-based IR, "distribution misalignment" is a more accurate and constructive framing. This reframing shifts the focus from attempting to "de-bias" the model itself to managing the dynamic relationship between query and corpus distributions.

\paratitle{Implications for Search in the AIGC Era}
As AIGC becomes dominant, the linguistic distribution of web contents will inevitably shift. Our work predicts that user queries, which still reflect human linguistic patterns, may suffer reduced retrieval effectiveness due to this growing misalignment. This highlights a critical future challenge for search engine design in an AIGC-saturated world.

\paratitle{Future Work and Mitigation Pathways}
Our work suggests several promising avenues for future research. The distribution alignment principle points to concrete mitigation strategies beyond simply filtering AIGC. For instance, systems could pursue:
(a) Query Adaptation: Dynamically rephrasing human queries to better match the stylistic and lexical patterns of the target (potentially AIGC-heavy) corpus.
(b) Alignment-as-a-Feature: Using KL-divergence between a query and a document as a powerful feature in learning-to-rank models to explicitly reward alignment.

\section{Ethics Statement}
Our analysis investigates retrieval performance on LLM-generated versus human-written content. We acknowledge the broader ethical implications associated with AI-generated content, including the spread of misinformation and bias. With these considerations in mind, our research is designed to minimize direct risks by exclusively using publicly available datasets and models that contain no sensitive or personal information. Furthermore, to promote transparency, reproducibility, and further community-driven solutions, we will release our complete source code.

\bibliographystyle{ACM-Reference-Format}
\bibliography{wsdm}

\newpage
\appendix

\section{Overview}

This supplementary material provides additional details to support the main paper. 
The content is organized into four major sections:

\begin{itemize}[leftmargin=12pt,itemsep=2pt]
    \item \textbf{Section~\ref{appendix:datasets} (Dataset Details)} describes all benchmark datasets used in our study, including construction methodology, rephrasing prompts, corpus statistics, and lexical/semantic analyses. 
    \item \textbf{Section~\ref{appendix:preliminary_experiments} (Preliminary Experiments)} reports additional experiments that motivate our study, focusing on source bias and retrieval preferences under different term-based models. 
    \item \textbf{Section~\ref{appendix:theorem_proof} (Theoretical Proofs)} presents the complete proofs of the distribution alignment theorem for TF-IDF, BM25 and DFR, complementing the discussion in the main article. 
    \item \textbf{Section~\ref{appendix:experiment_setting_and_extension} (Experiment Settings and Extensions)} provides retrieval parameter configurations, results under alternative preprocessing pipelines, and extended evaluations across multiple metrics to validate the robustness of our findings.
\end{itemize}

\section{Dataset Details}
\label{appendix:datasets}

This section provides supplementary details on the datasets used in our experiments, including full descriptions, construction methodology, statistical characteristics, and quality validation.

\subsection{Detailed Description of Datasets} \label{app: datasets_details}
\begin{table*}[t]
    \caption{Website links and licenses for the benchmarked datasets used in this paper.}
    \label{tab:dataset_licenses}
    \begin{tabular}{lll}
        \toprule
        \textbf{Dataset} & \textbf{Official Website Link} & \textbf{License} \\
        \midrule
        ANTIQUE & \url{https://ciir.cs.umass.edu/downloads/Antique/} & Research-only use (no formal license) \\
        FiQA-2018 & \url{https://sites.google.com/view/fiqa/} & Research-only use (no formal license) \\
        MS MARCO & \url{https://microsoft.github.io/msmarco/} & MIT License \\
        NQ & \url{https://ai.google.com/research/NaturalQuestions} & CC BY-SA 3.0 license \\
        HotpotQA & \url{https://hotpotqa.github.io} & CC BY-SA 4.0 license \\
        FEVER & \url{http://fever.ai} & CC BY-SA 3.0 license \\
        Climate-FEVER & \url{http://climatefever.ai} & Research-only use (no formal license) \\
        SCIDOCS & \url{https://allenai.org/data/scidocs} & GNU General Public License v3.0 license \\
        \bottomrule
    \end{tabular}
\end{table*}

\paratitle{ANTIQUE}
ANTIQUE~\cite{DBLP:conf/ecir/HashemiAZC20} is a non-factoid question answering dataset collected from community forums. It features open-ended questions and long, subjective answers, reflecting the diversity and complexity of real user information needs. This dataset is widely used to evaluate retrieval systems on tasks that require handling nuanced, explanatory, and opinion-based content.

\paratitle{FiQA-2018}
FiQA-2018~\cite{maia201818} focuses on financial question answering and opinion mining. It consists of real-world questions and community-authored responses drawn from financial forums such as Reddit and Yahoo Finance, supplemented with a smaller set of professional content. The dataset supports research on information retrieval and sentiment analysis in the financial domain, emphasizing domain-specific language and diverse, user-generated information needs.

\paratitle{MS MARCO}
MS MARCO~\cite{DBLP:conf/nips/NguyenRSGTMD16} is a large-scale dataset released by Microsoft, based on anonymized real user queries from Bing search logs. For each query, relevant passages are selected from web documents and manually annotated for relevance. It serves as a benchmark for document ranking, passage retrieval, and machine reading comprehension, and is widely adopted in both information retrieval and natural language processing research.

\paratitle{NQ (Natural Questions)}
The Natural Questions (NQ)~\cite{kwiatkowski2019natural} dataset comprises natural language questions issued by Google users, each paired with relevant Wikipedia documents and annotated answers. It is designed to advance research in open-domain question answering and complex information retrieval, with a focus on modeling authentic user queries.

\paratitle{HotpotQA}
HotpotQA~\cite{yang2018hotpotqa} is a multi-hop question answering dataset that requires reasoning across multiple Wikipedia passages. Each question is accompanied by supporting evidence, challenging models to not only find the correct answer but also to provide interpretable reasoning chains, making it suitable for evaluating inference and evidence aggregation capabilities.

\paratitle{FEVER}
FEVER~\cite{thorne2018fever} is a fact verification dataset containing a large collection of claims based on Wikipedia, each labeled with supporting, refuting, or insufficient evidence. It has driven progress in automated fact-checking and evidence retrieval, providing a challenging benchmark for verifying factual statements.

\paratitle{Climate-FEVER}
Climate-FEVER~\cite{diggelmann2020climate} is a fact verification dataset tailored to the climate science domain. It includes claims inspired by real-world debates on climate change, each paired with evidence sentences from Wikipedia. The task involves determining whether the evidence supports, refutes, or is insufficient to judge the claim, highlighting the challenges of scientific fact-checking.

\paratitle{SCIDOCS}
SCIDOCS~\cite{cohan2020specter} is a dataset derived from scholarly articles, encompassing citation relationships, topic labels, and metadata. It is used for tasks such as academic document retrieval, citation prediction, and scientific knowledge graph construction, supporting research on information organization and discovery in the scientific literature.

\subsection{Dataset Construction}
\label{app:dataset_construction}

We construct parallel LLM-rephrased versions of both documents and queries for each dataset, following the semantic-preserving rewriting protocol of \citet{dai-etal-2024-cocktail}. The goal is to create semantically aligned but lexically divergent texts to isolate the effect of text source on retrieval performance. Document rewriting is performed using \text{Llama-3-8B-Instruct} at generation temperatures $T \in \{0.2, 0.7, 1.0\}$, while query rewriting uses \text{GPT-4} at $T=0.2$. To prevent any leakage of relevance or contextual information, each text is rephrased independently based solely on its original content. The original relevance judgments are inherited by the LLM-generated texts under the assumption of semantic equivalence, following standard practice in this setup.

A potential confound in our construction is whether rewriting alters document length systematically. As shown in Table~\ref{tab:app_dataset_stat}, average length remains stable across temperature settings, suggesting that differences in retrieval performance are more likely due to lexical and source characteristics than to length variation.

\subsection{Rephrasing Prompts}
To ensure consistency across document and query rewriting, we employ the same prompt template, shown in Figure~\ref{fig:rephrasing_prompt}.

\begin{figure}[ht]
\begin{lstlisting}[frame=single, basicstyle=\ttfamily\scriptsize, xleftmargin=0pt, numbers=none, breaklines=true]
<|system|>
You are a helpful assistant.

<|user|>
Please follow the instructions below:
1. Maintain the original meaning of the input text.
2. Keep the length of the paraphrased text similar to the original text.
3. Output the paraphrased text directly.

Following is the text you need to paraphrase:
{input text}
Your answer must be formatted as:
Rewritten Text:
<your rewritten text>
\end{lstlisting}
\caption{Prompt used for both document rephrasing with Llama3-8b-Instruct and query rephrasing with GPT-4.}
\label{fig:rephrasing_prompt}
\end{figure}

\begin{table*}[t]
\centering
\caption{Statistics of all 8 datasets. Avg. D/Q denotes the average number of relevant documents per query. H/L-Q and H/L-D represent human/LLM-written queries and documents respectively, with L-D under different temperature settings (0.2/0.7/1.0). All length statistics are measured in words.}
\label{tab:app_dataset_stat}
\begin{tabular}{cccccccccc}
\toprule
\multirow{3}{*}{Dataset} & \multirow{3}{*}{\# Query} & \multirow{3}{*}{\# Corpus} & \multirow{3}{*}{Avg. D/Q} & \multicolumn{6}{c}{Avg. Word Length} \\
\cmidrule(lr){5-10}
   &          &          &          & H-Q & L-Q & H-D & \makecell{L-D\\Temp0.2} & \makecell{L-D\\Temp0.7} & \makecell{L-D\\Temp1.0} \\
\midrule
ANTIQUE         & 2426      & 403,666    & 11.3      & 9.2          & 10.0     & 40.2       & 41.2      & 42.0      & 43.1  \\
FiQA-2018       & 648       & 57,450     & 2.6       & 10.8         & 12.7     & 133.2      & 105.6     & 106.0     & 106.8  \\
\midrule
MS MARCO        & 6,979     & 542,203    & 1.1       & 5.9          & 6.9      & 58.1       & 57.3      & 58.1      & 59.1  \\
NQ              & 3,446     & 104,194    & 1.2       & 9.2          & 9.9      & 86.9       & 83.4      & 84.0      & 84.9  \\
HotpotQA        & 7,405     & 111,107    & 2.0       & 15.7         & 16.0     & 67.9       & 69.4      & 69.8      & 70.6  \\
FEVER           & 6,666     & 114,529    & 1.2       & 8.5          & 9.2      & 113.4      & 89.6      & 89.6      & 89.9  \\
Climate-FEVER   & 1,535     & 101,339    & 3.0       & 20.2         & 20.3     & 99.4       & 81.3      & 81.4      & 81.9  \\
\midrule
SCIDOCS         & 1,000     & 25,259     & 4.7       & 9.4          & 10.1     & 169.7      & 144.9     & 144.2     & 143.7  \\
\bottomrule
\end{tabular}
\end{table*}

\subsection{HC3-English Corpus Statistics}
\label{app:hc3_stats}

We report basic statistics of the HC3-English corpus~\cite{guo-etal-2023-hc3}, a publicly available dataset containing human and ChatGPT-generated answers to the same questions. The corpus consists of 24,322 unique questions, with a total of 85,449 answer instances: 58,546 human-written responses and 26,903 ChatGPT-generated responses. Each question may have multiple answers, enabling comparative analysis between human and AI-generated text. The distribution of questions and answers across the six source source domains is shown in Table~\ref{tab:hc3_domain_stats}. 

\begin{table*}[t]
\centering
\caption{Distribution of questions and answers in the HC3-English corpus, including data sources (from Guo et al., 2023).}
\label{tab:hc3_domain_stats}
\begin{tabular}{lcccl}
\toprule
\textbf{Domain} & \textbf{\# Questions} & \textbf{\# Human Answers} & \textbf{\# ChatGPT Answers} & \textbf{Source} \\
\midrule
reddit\_eli5  & 17,112 & 51,336 & 16,660 & ELI5 dataset~\cite{fan2019eli5} \\
open\_qa      & 1,187  & 1,187  & 3,561  & WikiQA dataset~\cite{yang2015wikiqa} \\
wiki\_csai    & 842    & 842    & 842    & Crawled Wikipedia\\
medicine      & 1,248  & 1,248  & 1,337  & Medical Dialog dataset~\cite{chen2020meddialog} \\
finance       & 3,933  & 3,933  & 4,503  & FiQA dataset~\cite{maia201818} \\
\midrule
\textbf{Total}& \textbf{24,322} & \textbf{58,546} & \textbf{26,903} & — \\
\bottomrule
\end{tabular}
\end{table*}

\subsection{Lexical and Semantic Analysis}
\label{app:lexical_semantic_analysis}

To assess the quality of our rephrased datasets, we conduct both lexical and semantic analyses following the methodology of \citet{dai-etal-2024-cocktail}. For lexical analysis, we compute Jaccard similarity and overlap between each LLM-generated document $d^{\text{LLM}}$ and its corresponding human-written counterpart $d^{\text{Human}}$:

\begin{align}
\text{Jaccard similarity} &= \frac{|d^{\text{LLM}} \cap d^{\text{Human}}|}{|d^{\text{LLM}} \cup d^{\text{Human}}|} \\
\text{Overlap} &= \frac{|d^{\text{LLM}} \cap d^{\text{Human}}|}{|d^{\text{Human}}|}
\end{align}

As shown in Table~\ref{tab:jaccard_overlap_similarity}, lexical overlap remains moderate across all datasets and temperature settings, with Jaccard scores below 0.6 and overlap below 0.75, indicating substantial divergence in word choice despite semantic preservation. This property is essential for isolating the effect of text source in retrieval.

For semantic analysis, we measure cosine similarity between document embeddings using OpenAI's \texttt{text-embedding-ada-002} model. Table~\ref{tab:llama3_corpus_similarity} reports similarities for matching pairs (i.e., human and its LLM-rephrased version) and random pairs (i.e., human document paired with a randomly selected LLM-generated document). Matching pairs exhibit consistently high similarity (mean > 0.93 across datasets), indicating strong semantic preservation. In contrast, random pairs yield significantly lower scores (typically below 0.75), confirming that the high similarity is specific to semantically aligned pairs rather than a general property of the embedding space.

These findings confirm that our corpus captures systematic lexical differences attributable to text source while preserving semantics, thereby enabling controlled studies of their effects on retrieval.

\begin{table*}[t]
\centering
\caption{Jaccard and Overlap similarity (mean $\pm$ std) across datasets and LLM temperatures.}
\label{tab:jaccard_overlap_similarity}
\begin{tabular}{lcccccc}
\toprule
\multirow{2}{*}{Dataset} & \multicolumn{3}{c}{Jaccard Similarity} & \multicolumn{3}{c}{Overlap Similarity} \\
\cmidrule(lr){2-4} \cmidrule(lr){5-7}
 & T=0.2 & T=0.7 & T=1.0 & T=0.2 & T=0.7 & T=1.0 \\
\midrule
ANTIQUE & 0.325 ($\pm$0.107) & 0.312 ($\pm$0.101) & 0.297 ($\pm$0.094) & 0.518 ($\pm$0.158) & 0.506 ($\pm$0.157) & 0.493 ($\pm$0.157) \\
FiQA & 0.374 ($\pm$0.093) & 0.357 ($\pm$0.088) & 0.335 ($\pm$0.082) & 0.534 ($\pm$0.107) & 0.519 ($\pm$0.105) & 0.500 ($\pm$0.104) \\
\midrule
MS MARCO & 0.450 ($\pm$0.111) & 0.432 ($\pm$0.107) & 0.410 ($\pm$0.104) & 0.643 ($\pm$0.122) & 0.632 ($\pm$0.122) & 0.616 ($\pm$0.122) \\
NQ & 0.494 ($\pm$0.109) & 0.472 ($\pm$0.106) & 0.445 ($\pm$0.103) & 0.663 ($\pm$0.109) & 0.648 ($\pm$0.110) & 0.628 ($\pm$0.112) \\
HotpotQA & 0.580 ($\pm$0.104) & 0.561 ($\pm$0.103) & 0.536 ($\pm$0.102) & 0.763 ($\pm$0.099) & 0.751 ($\pm$0.101) & 0.736 ($\pm$0.103) \\
FEVER & 0.590 ($\pm$0.106) & 0.569 ($\pm$0.105) & 0.544 ($\pm$0.104) & 0.763 ($\pm$0.108) & 0.751 ($\pm$0.110) & 0.735 ($\pm$0.113) \\
Climate-FEVER & 0.588 ($\pm$0.106) & 0.569 ($\pm$0.105) & 0.543 ($\pm$0.104) & 0.768 ($\pm$0.106) & 0.756 ($\pm$0.108) & 0.740 ($\pm$0.112) \\
\midrule
SCIDOCS & 0.526 ($\pm$0.100) & 0.499 ($\pm$0.095) & 0.464 ($\pm$0.090) & 0.673 ($\pm$0.099) & 0.652 ($\pm$0.098) & 0.624 ($\pm$0.098) \\
\bottomrule
\end{tabular}
\end{table*}

\begin{table*}[t]
\centering
\caption{Cosine similarity (mean $\pm$ std) for matching and random pairs across datasets and LLM temperatures.}
\label{tab:llama3_corpus_similarity}
\begin{tabular}{lcccccc}
\toprule
\multirow{2}{*}{Dataset} & \multicolumn{3}{c}{Matching Pairs} & \multicolumn{3}{c}{Random Pairs} \\
\cmidrule(lr){2-4} \cmidrule(lr){5-7}
 & T=0.2 & T=0.7 & T=1.0 & T=0.2 & T=0.7 & T=1.0 \\
\midrule
ANTIQUE & 0.936 ($\pm$0.029) & 0.933 ($\pm$0.029) & 0.930 ($\pm$0.029) & 0.704 ($\pm$0.027) & 0.704 ($\pm$0.027) & 0.704 ($\pm$0.027) \\
FiQA & 0.960 ($\pm$0.019) & 0.958 ($\pm$0.020) & 0.955 ($\pm$0.021) & 0.721 ($\pm$0.032) & 0.722 ($\pm$0.032) & 0.722 ($\pm$0.032) \\
\midrule
MS MARCO & 0.970 ($\pm$0.021) & 0.969 ($\pm$0.022) & 0.966 ($\pm$0.022) & 0.675 ($\pm$0.028) & 0.675 ($\pm$0.028) & 0.675 ($\pm$0.028) \\
NQ & 0.977 ($\pm$0.018) & 0.976 ($\pm$0.019) & 0.973 ($\pm$0.019) & 0.698 ($\pm$0.028) & 0.698 ($\pm$0.028) & 0.699 ($\pm$0.028) \\
HotpotQA & 0.981 ($\pm$0.016) & 0.979 ($\pm$0.016) & 0.977 ($\pm$0.018) & 0.705 ($\pm$0.028) & 0.705 ($\pm$0.028) & 0.705 ($\pm$0.028) \\
FEVER & 0.979 ($\pm$0.014) & 0.977 ($\pm$0.014) & 0.975 ($\pm$0.015) & 0.696 ($\pm$0.029) & 0.696 ($\pm$0.028) & 0.696 ($\pm$0.028) \\
Climate-FEVER & 0.979 ($\pm$0.014) & 0.977 ($\pm$0.015) & 0.975 ($\pm$0.017) & 0.696 ($\pm$0.028) & 0.696 ($\pm$0.028) & 0.696 ($\pm$0.028) \\
\midrule
SCIDOCS & 0.984 ($\pm$0.014) & 0.982 ($\pm$0.015) & 0.979 ($\pm$0.016) & 0.742 ($\pm$0.031) & 0.743 ($\pm$0.031) & 0.743 ($\pm$0.031) \\
\bottomrule
\end{tabular}%
\end{table*}

\section{Preliminary Experiments}
\label{appendix:preliminary_experiments}

This appendix presents detailed information on the preliminary experiments conducted to assess \textit{source bias}, which refers to the tendency of term-based retrieval models to favor human-written documents when processing human-authored queries. We describe the experimental setup, evaluation methodology, and key results used to analyze retrieval model preferences across different content sources.

\subsection{Experimental Setup}
For all term-based retrieval experiments, we applied consistent preprocessing across datasets, following standard practices in lexical retrieval. Both documents and queries were tokenized using Lucene's analyzer, with Porter stemming for word normalization and removal of stopwords to focus on content-bearing terms. The following parameter settings were used for each retrieval model:

\begin{itemize}
    \item \textbf{TF-IDF}: Standard implementation with logarithmic term frequency and inverse document frequency.
    \item \textbf{BM25}: $k_1=0.9$, $b=0.4$, as recommended in~\cite{yang2018anserini}.
    \item \textbf{Query Likelihood (QL)}: Jelinek-Mercer smoothing with $\lambda=0.1$.
    \item \textbf{DFR}: Inverse document frequency model (In) with Laplace after-effect (L) and hypergeometric normalization (H2), $c=1.0$, $z=0.3$.
\end{itemize}

All models were implemented using the Pyserini toolkit~\cite{lin2021pyserini}, extended to support DFR and TF-IDF functionalities.

\subsection{Evaluation Methodology}

To quantify retrieval preferences between different sources, we adopt the methodology from prior work~\cite{dai2024neural,dai-etal-2024-cocktail}. Specifically, we evaluate retrieval metrics separately for each content source while preserving the original mixed-source ranking. When evaluating one source (e.g., human-written), documents from the other source (e.g., LLM-generated) are treated as irrelevant. We measure the relative performance difference using the following formula:

{\small
\begin{equation}
\text{Relative $\Delta$} = \frac{\texttt{RelMetric}_\text{Human} - \texttt{RelMetric}_\text{LLM}}{(\texttt{RelMetric}_\text{Human} +\texttt{RelMetric}_\text{LLM}) / 2},
\label{eq:source_bias_eval}
\end{equation}
}

where $\texttt{RelMetric}_\text{Human}$ and $\texttt{RelMetric}_\text{LLM}$ denote relevance scores (e.g., NDCG, MAP, or Precision) computed over human-written and LLM-generated documents, respectively. A positive value indicates a preference for human-written content, while a negative value indicates the opposite tendency.

\subsection{Preliminary Results}

Our preliminary experiments consistently showed that term-based retrievers prefer human-written documents when using human-written queries across all settings, confirming and extending previous findings\cite{dai2024neural}. Table~\ref{tab:preference_analysis} demonstrates this pattern across various retrieval metrics.

\begin{table*}[t]
\centering
\caption{Preference analysis results across different datasets, temperatures, and retrieval methods. 
Temperature (Temp.) refers to the sampling temperature used for generating LLM-generated documents. For all retrieval methods, we retrieved the top 200 documents. \textcolor{humanblue}{blue cells} indicate preference towards human-written documents, while \textcolor{llmred}{red cells} indicate preference towards LLM-generated content.  $\Delta$P@10, $\Delta$NDCG@10, and $\Delta$MAP represent the preference metrics based on Precision@10, NDCG@10, and MAP respectively. * indicates statistical significance ($p < 0.05$).}
\label{tab:preference_analysis}
\aboverulesep=0pt
\belowrulesep=0pt
\renewcommand{\arraystretch}{1.3}
\begin{tabular}{cccccccccccccc}
\toprule
\multirow{2}{*}{Dataset} & \multirow{2}{*}{Temp.} & \multicolumn{4}{c}{$\Delta$P@10} & \multicolumn{4}{c}{$\Delta$NDCG@10} & \multicolumn{4}{c}{$\Delta$MAP} \\
\cmidrule(lr){3-6} \cmidrule(lr){7-10} \cmidrule(lr){11-14}
& & TF-IDF & BM25 & QL & DFR & TF-IDF & BM25 & QL & DFR & TF-IDF & BM25 & QL & DFR \\
\midrule
\multirow{3}{*}{ANTIQUE} & 0.2 & \posnum{0.629*} & \posnum{0.530*} & \posnum{0.634*} & \posnum{0.604*} & \posnum{0.696*} & \posnum{0.601*} & \posnum{0.719*} & \posnum{0.678*} & \posnum{0.730*} & \posnum{0.627*} & \posnum{0.736*} & \posnum{0.713*} \\
& 0.7 & \posnum{0.630*} & \posnum{0.523*} & \posnum{0.635*} & \posnum{0.623*} & \posnum{0.696*} & \posnum{0.591*} & \posnum{0.690*} & \posnum{0.687*} & \posnum{0.747*} & \posnum{0.640*} & \posnum{0.746*} & \posnum{0.738*} \\
& 1.0 & \posnum{0.700*} & \posnum{0.574*} & \posnum{0.698*} & \posnum{0.693*} & \posnum{0.769*} & \posnum{0.650*} & \posnum{0.757*} & \posnum{0.759*} & \posnum{0.792*} & \posnum{0.696*} & \posnum{0.790*} & \posnum{0.787*} \\
\cmidrule(l){2-14}
\multirow{3}{*}{FiQA-2018} & 0.2 & \posnum{0.082} & \posnum{0.168*} & \posnum{0.172*} & \posnum{0.109*} & \posnum{0.091*} & \posnum{0.223*} & \posnum{0.243*} & \posnum{0.112*} & \posnum{0.112*} & \posnum{0.289*} & \posnum{0.287*} & \posnum{0.133*} \\
& 0.7 & \posnum{0.104} & \posnum{0.188*} & \posnum{0.180*} & \posnum{0.151*} & \posnum{0.165*} & \posnum{0.286*} & \posnum{0.265*} & \posnum{0.168*} & \posnum{0.178*} & \posnum{0.318*} & \posnum{0.312*} & \posnum{0.160*} \\
& 1.0 & \posnum{0.113} & \posnum{0.179*} & \posnum{0.149*} & \posnum{0.156*} & \posnum{0.143} & \posnum{0.223*} & \posnum{0.211*} & \posnum{0.156*} & \posnum{0.156*} & \posnum{0.238*} & \posnum{0.271*} & \posnum{0.139*} \\
\cmidrule{1-14}
\multirow{3}{*}{MS MARCO} & 0.2 & \posnum{0.300*} & \posnum{0.285*} & \posnum{0.293*} & \posnum{0.293*} & \posnum{0.554*} & \posnum{0.535*} & \posnum{0.517*} & \posnum{0.539*} & \posnum{0.674*} & \posnum{0.653*} & \posnum{0.622*} & \posnum{0.655*} \\
& 0.7 & \posnum{0.309*} & \posnum{0.290*} & \posnum{0.298*} & \posnum{0.301*} & \posnum{0.569*} & \posnum{0.550*} & \posnum{0.530*} & \posnum{0.555*} & \posnum{0.693*} & \posnum{0.673*} & \posnum{0.640*} & \posnum{0.676*} \\
& 1.0 & \posnum{0.346*} & \posnum{0.323*} & \posnum{0.334*} & \posnum{0.337*} & \posnum{0.620*} & \posnum{0.586*} & \posnum{0.574*} & \posnum{0.605*} & \posnum{0.749*} & \posnum{0.711*} & \posnum{0.686*} & \posnum{0.731*} \\
\cmidrule(l){2-14}
\multirow{3}{*}{NQ} & 0.2 & \posnum{0.055*} & \posnum{0.056*} & \posnum{0.055*} & \posnum{0.054*} & \posnum{0.134*} & \posnum{0.135*} & \posnum{0.118*} & \posnum{0.131*} & \posnum{0.170*} & \posnum{0.172*} & \posnum{0.145*} & \posnum{0.165*} \\
& 0.7 & \posnum{0.051*} & \posnum{0.056*} & \posnum{0.067*} & \posnum{0.056*} & \posnum{0.157*} & \posnum{0.152*} & \posnum{0.149*} & \posnum{0.158*} & \posnum{0.211*} & \posnum{0.203*} & \posnum{0.188*} & \posnum{0.210*} \\
& 1.0 & \posnum{0.088*} & \posnum{0.080*} & \posnum{0.103*} & \posnum{0.083*} & \posnum{0.218*} & \posnum{0.208*} & \posnum{0.214*} & \posnum{0.213*} & \posnum{0.285*} & \posnum{0.274*} & \posnum{0.267*} & \posnum{0.274*} \\
\cmidrule(l){2-14}
\multirow{3}{*}{HotpotQA} & 0.2 & \posnum{0.111*} & \posnum{0.107*} & \posnum{0.135*} & \posnum{0.108*} & \posnum{0.325*} & \posnum{0.300*} & \posnum{0.337*} & \posnum{0.320*} & \posnum{0.404*} & \posnum{0.370*} & \posnum{0.407*} & \posnum{0.399*} \\
& 0.7 & \posnum{0.135*} & \posnum{0.126*} & \posnum{0.151*} & \posnum{0.133*} & \posnum{0.354*} & \posnum{0.321*} & \posnum{0.358*} & \posnum{0.349*} & \posnum{0.430*} & \posnum{0.387*} & \posnum{0.424*} & \posnum{0.425*} \\
& 1.0 & \posnum{0.149*} & \posnum{0.141*} & \posnum{0.166*} & \posnum{0.146*} & \posnum{0.389*} & \posnum{0.352*} & \posnum{0.391*} & \posnum{0.380*} & \posnum{0.473*} & \posnum{0.423*} & \posnum{0.461*} & \posnum{0.462*} \\
\cmidrule(l){2-14}
\multirow{3}{*}{FEVER} & 0.2 & \posnum{0.033*} & \posnum{0.044*} & \posnum{0.064*} & \posnum{0.035*} & \posnum{0.146*} & \posnum{0.183*} & \posnum{0.184*} & \posnum{0.146*} & \posnum{0.202*} & \posnum{0.250*} & \posnum{0.241*} & \posnum{0.201*} \\
& 0.7 & \posnum{0.036*} & \posnum{0.047*} & \posnum{0.070*} & \posnum{0.034*} & \posnum{0.173*} & \posnum{0.211*} & \posnum{0.216*} & \posnum{0.178*} & \posnum{0.243*} & \posnum{0.290*} & \posnum{0.287*} & \posnum{0.250*} \\
& 1.0 & \posnum{0.052*} & \posnum{0.069*} & \posnum{0.089*} & \posnum{0.056*} & \posnum{0.237*} & \posnum{0.277*} & \posnum{0.267*} & \posnum{0.243*} & \posnum{0.333*} & \posnum{0.379*} & \posnum{0.354*} & \posnum{0.338*} \\
\cmidrule(l){2-14}
\multirow{3}{*}{Climate-FEVER} & 0.2 & \posnum{0.046*} & \posnum{0.103*} & \posnum{0.156*} & \posnum{0.043*} & \posnum{0.063*} & \posnum{0.128*} & \posnum{0.155*} & \posnum{0.059*} & \posnum{0.041*} & \posnum{0.122*} & \posnum{0.133*} & \posnum{0.041*} \\
& 0.7 & \posnum{0.073*} & \posnum{0.139*} & \posnum{0.195*} & \posnum{0.073*} & \posnum{0.095*} & \posnum{0.166*} & \posnum{0.205*} & \posnum{0.105*} & \posnum{0.071*} & \posnum{0.137*} & \posnum{0.181*} & \posnum{0.089*} \\
& 1.0 & \posnum{0.154*} & \posnum{0.218*} & \posnum{0.286*} & \posnum{0.175*} & \posnum{0.248*} & \posnum{0.333*} & \posnum{0.359*} & \posnum{0.263*} & \posnum{0.241*} & \posnum{0.333*} & \posnum{0.331*} & \posnum{0.256*} \\
\cmidrule{1-14}
\multirow{3}{*}{SciDocs} & 0.2 & \posnum{0.024} & \posnum{0.058} & \negnum{-0.004} & \posnum{0.030} & \posnum{0.063} & \posnum{0.114*} & \posnum{0.002} & \posnum{0.061} & \posnum{0.075**} & \posnum{0.133*} & \posnum{0.009} & \posnum{0.058*} \\
& 0.7 & \posnum{0.047} & \posnum{0.076*} & \posnum{0.027} & \posnum{0.044} & \posnum{0.065} & \posnum{0.124*} & \posnum{0.038} & \posnum{0.062} & \posnum{0.041*} & \posnum{0.135*} & \posnum{0.055} & \posnum{0.046*} \\
& 1.0 & \posnum{0.098*} & \posnum{0.106**} & \posnum{0.054} & \posnum{0.100**} & \posnum{0.152**} & \posnum{0.182*} & \posnum{0.083*} & \posnum{0.151*} & \posnum{0.169*} & \posnum{0.206*} & \posnum{0.104**} & \posnum{0.158*} \\
\bottomrule
\end{tabular}
\end{table*}
\section{Theoretical Proofs}
\label{appendix:theorem_proof}

This appendix provides the complete mathematical proofs for the distribution alignment theorem as applied to three classical retrieval models: TF-IDF, BM25, and DFR. These proofs substantiate the conceptual arguments presented in the main paper by rigorously showing that, although these models employ different scoring mechanisms, they all implicitly favor documents whose term distributions align closely with the query distribution.

We analyze each model within a consistent analytical framework based on expected retrieval scores under query--document independence. For each model, we derive an upper bound on the expected score and show that this bound is asymptotically maximized when the document-term distribution $P_D(w)$ matches a transformed version of the query-term distribution $P_Q(w)$. This common structure highlights the central role of \emph{distributional alignment} in term-based retrieval models.

Below, we first define the notation used throughout the analysis, then state the shared assumptions, and finally present the detailed proof for each model.

\subsection{Notation}
We now introduce the formal notation used in the following proofs. The key quantities include term frequencies, document and query length statistics, and marginal term distributions over the query and document collections. For a complete overview, see Table~\ref{tab:term_base_notations}.

\begin{table*}[!ht]
\centering
\caption{Description of notations.}
\label{tab:term_base_notations}
\begin{tabular}{l p{0.8\linewidth}} 
\toprule
\textbf{Symbol} & \textbf{Description} \\
\midrule
$V$ & vocabulary. \\
$\mathcal{Q}=\{q_i\}$ & Query set, where $q \sim \mathrm{Unif}(\mathcal{Q})$. \\
$\mathcal{D}=\{d_i\}$ & Document set, where $d \sim \mathrm{Unif}(\mathcal{D})$. \\
$|q|, |d|$ & Length of a query / document. \\
$L_q, L_d$ & Average lengths: $L_q=\mathbb{E}_q[|q|], L_d=\mathbb{E}_d[|d|]$. \\
$\mathrm{tf}(w,q), \mathrm{tf}(w,d)$ & Term frequencies. The expected values are: 
    $\mathbb{E}_q[\mathrm{tf}(w,q)]=L_q P_Q(w)$, 
    $\mathbb{E}_d[\mathrm{tf}(w,d)]=L_d P_D(w)$. \\
$P_Q(w)$ & Probability that a random query token equals $w$. ($\sum_w P_Q(w) = 1$.) \\
$P_D(w)$ & Probability that a random document token equals $w$. ($\sum_w P_D(w) = 1$.) \\
$\mathrm{idf}(w)$ & Pre-computed inverse document frequency, $\log(N/\mathrm{df}(w))$, treated as a constant. \\
\bottomrule
\end{tabular}
\end{table*}

\subsection{Shared Assumptions}

\begin{assumption}[Query–Document Independence]
    \label{as:indep} 
    The random query $q$ is independent of the random document $d$. As a result, the expectation over the joint distribution factors as  
    \begin{equation}
        \mathbb{E}_{q,d}[\,\cdot\,] = \mathbb{E}_q \mathbb{E}_d[\,\cdot\,].
    \end{equation}
\end{assumption}

\begin{assumption}[Bounded Normalized TF Deviation]
    \label{as:dev}
    The variance of normalized term frequencies is uniformly bounded:
    \begin{equation}
        \mathbb{E}_q\left[ \left( \frac{\mathrm{tf}(w,q)}{|q|} - P_Q(w) \right)^2 \right] \le \sigma_q^2, \quad
        \mathbb{E}_d\left[ \left( \frac{\mathrm{tf}(w,d)}{|d|} - P_D(w) \right)^2 \right] \le \sigma_d^2,
    \end{equation}
    for all $w \in V$, where $\sigma_q, \sigma_d \ge 0$ are small constants. 
\end{assumption}

\subsection{TF-IDF}

\begin{proof}
We analyze the expected TF-IDF cosine similarity between a random query $q \in \mathcal{Q}$ and a random document $d \in \mathcal{D}$. The TF-IDF score is:
\begin{equation}
    \mathrm{Score}_{\mathrm{TF\text{-}IDF}}(q,d) = 
    \frac{\sum_{w \in V} t_q(w) \, t_d(w)}
         {\sqrt{\sum_{w \in V} t_q(w)^2} \cdot \sqrt{\sum_{w \in V} t_d(w)^2}},
\end{equation}
where $t_x(w) = \mathrm{tf}(w,x) \cdot \mathrm{idf}(w)$ and $\mathrm{idf}(w) = \log(N / \mathrm{df}(w))$ is treated as a constant.

Define the length-normalized TF--IDF components:
\begin{equation}
    a_w := \frac{t_q(w)}{|q|} = \frac{\mathrm{tf}(w,q)}{|q|} \cdot \mathrm{idf}(w), \quad
    b_w := \frac{t_d(w)}{|d|} = \frac{\mathrm{tf}(w,d)}{|d|} \cdot \mathrm{idf}(w).
\end{equation}
Then the cosine score can be rewritten as:
\begin{equation}
    \mathrm{Score}_{\mathrm{TF\text{-}IDF}}(q,d) = 
    \frac{\sum_{w} a_w b_w \cdot |q| |d|}
         {\left( \sqrt{\sum_{w} a_w^2} \cdot |q| \right)
          \left( \sqrt{\sum_{w} b_w^2} \cdot |d| \right)}
    = \frac{\langle a, b \rangle}{\|a\|_2 \|b\|_2},
\end{equation}
which is the cosine of the normalized TF--IDF vectors $a$ and $b$.

Due to the nonlinearity of the cosine function, $\mathbb{E}_{q,d}[\mathrm{Score}_{\mathrm{TF\text{-}IDF}}(q,d)]$ does not admit a closed-form expression. However, under Assumption~\ref{as:dev}, the normalized term frequencies concentrate around their expectations $P_Q(w)$ and $P_D(w)$ in mean square. This allows us to analyze the expected similarity via a perturbation around the expected vectors.

Define the expected TF--IDF vectors:
\begin{equation}
    \mathbf{x}_w := P_Q(w) \cdot \mathrm{idf}(w), \quad
    \mathbf{y}_w := P_D(w) \cdot \mathrm{idf}(w).
\end{equation}
From Assumption~\ref{as:dev}, we have:
\begin{align}
    \mathbb{E}_q\left[ \left( \frac{\mathrm{tf}(w,q)}{|q|} - P_Q(w) \right)^2 \right] \le \sigma_q^2, \quad \forall w \in V. \\
    \mathbb{E}_d\left[ \left( \frac{\mathrm{tf}(w,d)}{|d|} - P_D(w) \right)^2 \right] \le \sigma_d^2,
    \quad \forall w \in V.
\end{align}
Multiplying by $\mathrm{idf}^2(w)$ and summing over $w$, we obtain the following bounds on the expected squared deviations:
\begin{align}
    \mathbb{E}_q \left[ \|a - \mathbf{x}\|_2^2 \right] \nonumber
    &= \sum_w \mathrm{idf}^2(w) \cdot \mathbb{E}_q\left[ \left( \frac{\mathrm{tf}(w,q)}{|q|} - P_Q(w) \right)^2 \right] \nonumber \\
    & \le \sigma_q^2 \sum_w \mathrm{idf}^2(w) =: \Delta_q, \\
    \mathbb{E}_d \left[ \|b - \mathbf{y}\|_2^2 \right]
    &\le \sigma_d^2 \sum_w \mathrm{idf}^2(w) =: \Delta_d,
\end{align}
where $\Delta_q, \Delta_d$ are small constants when $\sigma_q, \sigma_d$ are small.

Since the cosine function is twice continuously differentiable away from the origin, we expand $\cos(a,b)$ around $(\mathbf{x}, \mathbf{y})$ using a second-order Taylor approximation:
\begin{align}
    \cos(a,b) & = \cos(\mathbf{x},\mathbf{y}) 
    + \nabla_a \cos|_{(\mathbf{x},\mathbf{y})} \cdot (a - \mathbf{x}) \nonumber \\
    & + \nabla_b \cos|_{(\mathbf{x},\mathbf{y})} \cdot (b - \mathbf{y})
    + \mathcal{O}(\|a - \mathbf{x}\|^2 + \|b - \mathbf{y}\|^2).
\end{align}
Taking expectation and noting that $\mathbb{E}[a - \mathbf{x}] = 0$ and $\mathbb{E}[b - \mathbf{y}] = 0$, the first-order terms vanish. Using the bounds on $\mathbb{E}[\|a - \mathbf{x}\|_2^2]$ and $\mathbb{E}[\|b - \mathbf{y}\|_2^2]$, and the fact that the Hessian of cosine is bounded when $\|\mathbf{x}\|_2, \|\mathbf{y}\|_2 \ge \eta > 0$ (which holds since $\mathrm{idf}(w) > 0$ and $P_Q, P_D$ are distributions), we obtain:
\begin{equation}
    \mathbb{E}_{q,d}[\mathrm{Score}_{\mathrm{TF\text{-}IDF}}(q,d)] = \frac{\langle \mathbf{x}, \mathbf{y} \rangle}{\|\mathbf{x}\|_2 \|\mathbf{y}\|_2} + \mathcal{O}(\Delta_q + \Delta_d).
\end{equation}
Thus, the expected similarity is approximated by the cosine of the expected vectors, with approximation error controlled by $\Delta_q$ and $\Delta_d$. When these deviations are small, maximizing the leading term approximately maximizes the expected score.

By the Cauchy--Schwarz inequality:
\begin{equation}
    \langle \mathbf{x}, \mathbf{y} \rangle \le \|\mathbf{x}\|_2 \|\mathbf{y}\|_2,
\end{equation}
with equality if and only if $\mathbf{y} = c \mathbf{x}$ for some $c > 0$. This implies:
\begin{equation}
    P_D(w) \, \mathrm{idf}(w) = c \, P_Q(w) \, \mathrm{idf}(w), \quad \forall w \in V.
\end{equation}
Since $\mathrm{idf}(w) > 0$ for all $w$, we divide both sides to obtain $P_D(w) = c \, P_Q(w)$. Summing over $V$ and using $\sum_w P_D(w) = \sum_w P_Q(w) = 1$, we find $c = 1$. Hence, the unique maximizer is:
\begin{equation}
    P_D^\star(w) = P_Q(w), \quad \forall w \in V.
\end{equation}

Therefore, under bounded deviation of normalized term frequencies, the expected TF--IDF cosine similarity is asymptotically maximized when the document-term distribution matches the query-term distribution.
\end{proof}

\subsection{BM25}

\begin{proof}
We analyze the expected BM25 score between a random query $q \in \mathcal{Q}$ and a random document $d \in \mathcal{D}$. The BM25 score is:
\begin{equation}
    \mathrm{Score}_{\mathrm{BM25}}(q,d) = 
    \sum_{w \in q} \mathrm{idf}(w) \cdot 
    \frac{(k_1 + 1) \cdot \mathrm{tf}(w,d)}{k_1 \left( (1-b) + b \frac{|d|}{\mathrm{avgdl}} \right) + \mathrm{tf}(w,d)},
\end{equation}
where $\mathrm{idf}(w) = \log \frac{N - \mathrm{df}(w) + 0.5}{\mathrm{df}(w) + 0.5}$ , and $k_1 > 0$, $b \in [0,1]$ are hyperparameters.

Define the BM25 term-gain function:
\begin{equation}
    f(t; K) = \frac{(k_1 + 1) \cdot t}{K + t},
\end{equation}
where $K = k_1 \left( (1-b) + b \frac{|d|}{\mathrm{avgdl}} \right)$ is the document-length dependent saturation parameter. Then:
\begin{equation}
    \mathrm{Score}_{\mathrm{BM25}}(q,d) = \sum_{w \in q} \mathrm{idf}(w) \cdot f\bigl(\mathrm{tf}(w,d); K_d\bigr).
\end{equation}

Due to the nonlinearity of $f$, $\mathbb{E}_{q,d}[\mathrm{Score}_{\mathrm{BM25}}]$ does not admit a closed-form expression. However, under Assumption~\ref{as:indep} (query--document independence), we can factor the expectation:
\begin{align}
    \mathbb{E}_{q,d}[\mathrm{Score}_{\mathrm{BM25}}]
    &= \mathbb{E}_{q,d} \left[ \sum_{w \in q} \mathrm{idf}(w) \cdot f\bigl(\mathrm{tf}(w,d); K_d\bigr) \right] \nonumber \\
    &= \sum_w \mathbb{E}_q[\mathbf{1}\{w \in q\}] \cdot \mathrm{idf}(w) \cdot \mathbb{E}_d\left[ f\bigl(\mathrm{tf}(w,d); K_d\bigr) \right] \nonumber \\
    &= \sum_w P_Q(w) \cdot \mathrm{idf}(w) \cdot \mathbb{E}_d\left[ f\bigl(\mathrm{tf}(w,d); K_d\bigr) \right],
    \label{eq:bm25_Escore}
\end{align}
where we used $\mathbb{E}_q[\mathbf{1}\{w \in q\}] = P_Q(w)$.

Now, observe that $f(t; K)$ is strictly increasing and strictly concave in $t$ for fixed $K > 0$, since:
\begin{equation}
    \frac{\partial f}{\partial t} = \frac{(k_1 + 1) K}{(K + t)^2} > 0, \quad
    \frac{\partial^2 f}{\partial t^2} = -\frac{2(k_1 + 1) K}{(K + t)^3} < 0.
\end{equation}

To proceed, we make a mild simplification: assume that documents have typical length, so $K_d \approx k_1$ (when $|d| \approx \mathrm{avgdl}$). This is standard in average-case analysis and does not affect the optimization direction over $P_D$. Under this, $f(\mathrm{tf}(w,d); K_d) \approx f(\mathrm{tf}(w,d); k_1)$, and $f$ becomes a function of $\mathrm{tf}(w,d)$ only.

By Jensen's inequality, and using the concavity of $f$:
\begin{equation}
    \mathbb{E}_d\left[ f\bigl(\mathrm{tf}(w,d); k_1\bigr) \right]
    \le f\left( \mathbb{E}_d[\mathrm{tf}(w,d)]; k_1 \right)
    = f\left( L_d P_D(w); k_1 \right),
\end{equation}
where $L_d = \mathbb{E}_d[|d|]$ and we used $\mathbb{E}_d[\mathrm{tf}(w,d)] = L_d P_D(w)$ from Table~\ref{tab:term_base_notations}.

Substituting into~\eqref{eq:bm25_Escore}:
\begin{equation}
    \mathbb{E}_{q,d}[\mathrm{Score}_{\mathrm{BM25}}]
    \le \sum_w P_Q(w) \, \mathrm{idf}(w) \, f\left( L_d P_D(w); k_1 \right)
    =: F(P_D).
    \label{eq:bm25_functional}
\end{equation}

The functional $F(P_D)$ is strictly concave in $P_D$ because:
- $f(\cdot; k_1)$ is strictly concave,
- $P_Q(w) \mathrm{idf}(w) \ge 0$, and positive for $w$ in typical queries,
- hence each term is strictly concave in $P_D(w)$.

Therefore, $F(P_D)$ has a unique maximizer $P_D^\star$ on the probability simplex $\{ P_D \mid P_D(w) \ge 0, \sum_w P_D(w) = 1 \}$.

To find $P_D^\star$, introduce a Lagrange multiplier $\lambda$ for the constraint $\sum_w P_D(w) = 1$:
\begin{equation}
    \mathcal{L} = F(P_D) - \lambda \left( \sum_w P_D(w) - 1 \right).
\end{equation}

The first-order condition for $P_D^\star(w) > 0$ is:
\begin{equation}
    \frac{\partial \mathcal{L}}{\partial P_D(w)}
    = P_Q(w) \, \mathrm{idf}(w) \cdot L_d \cdot \frac{\partial f}{\partial t}\bigg|_{t = L_d P_D^\star(w)}
    - \lambda = 0.
\end{equation}

Compute the derivative:
\begin{equation}
    \frac{\partial f}{\partial t} = \frac{(k_1 + 1) k_1}{(k_1 + t)^2},
\end{equation}
so:
\begin{equation}
    P_Q(w) \, \mathrm{idf}(w) \cdot L_d \cdot \frac{(k_1 + 1) k_1}{(k_1 + L_d P_D^\star(w))^2} = \lambda.
\end{equation}

Since $\lambda$ is independent of $w$, we can write:
\begin{equation}
    (k_1 + L_d P_D^\star(w))^2 = c \cdot P_Q(w) \, \mathrm{idf}(w),
    \quad c > 0 \text{ (constant)}.
\end{equation}

Solving for $P_D^\star(w)$:
\begin{equation}
    P_D^\star(w) = \max\left\{ 0,\ \frac{1}{L_d} \left( \sqrt{c \, P_Q(w) \, \mathrm{idf}(w)} - k_1 \right) \right\}.
\end{equation}

Let $\alpha = \sqrt{c}/L_d$, then:
\begin{equation}
    P_D^\star(w) = \max\left\{ 0,\ \alpha \sqrt{P_Q(w) \, \mathrm{idf}(w)} - \frac{k_1}{L_d} \right\}.
    \label{eq:bm25_solution}
\end{equation}

The constant $\alpha > 0$ is determined by normalization $\sum_w P_D^\star(w) = 1$, and is unique due to strict concavity.

In standard retrieval settings, the average document length $L_d$ is significantly larger than the saturation parameter $k_1$, so $k_1 / L_d \ll 1$. This makes the constant offset in~\eqref{eq:bm25_solution} negligible in practice, leading to the approximation:
\begin{equation}
    P_D^\star(w) \propto \sqrt{P_Q(w) \, \mathrm{idf}(w)}.
    \label{eq:bm25_root}
\end{equation}

Consequently, under bounded term frequency fluctuations and typical document lengths, the expected BM25 score is asymptotically maximized when the document-term distribution aligns with the geometric mean of $P_Q(w)$ and $\mathrm{idf}(w)$.
\end{proof}

Empirical analyses of large query logs have shown a weak but positive correlation between $\mathrm{idf}(w)$ and $P_Q(w)$, reflecting users’ tendency to select discriminative, information-rich terms~\cite{jansen2000real,craswell2001effective, arampatzis2010empirical}. 
Motivated by this observation, we consider an \emph{idealized strong assumption} under which $\mathrm{idf}(w) \propto P_Q(w)$. 
In this regime, the square root in~\eqref{eq:bm25_root} becomes
\[
    \sqrt{P_Q(w) \cdot \mathrm{idf}(w)} \propto P_Q(w),
\]
and the optimal document distribution collapses to
\begin{equation}
    P_D^\star(w) \propto P_Q(w).
\end{equation}
Combined with the normalization constraint $\sum_w P_D^\star(w) = 1$, this yields $P_D^\star = P_Q$. Thus, in this regime, BM25's optimal alignment condition coincides with that of Query Likelihood and TF-IDF.

\subsection{DFR}

\begin{proof}
We analyze the expected DFR score between a random query $q \in \mathcal{Q}$ and a random document $d \in \mathcal{D}$. The Divergence From Randomness (DFR) framework models term weight as a product of two components:
\begin{equation}
    \text{Score}_{\text{DFR}}(q,d) = \sum_{w \in q} \underbrace{[-\log P_1(w|d)]}_{\text{information gain from frequency}} \cdot \underbrace{[1 - P_2(w|d)]}_{\text{normalization by elite set}},
\end{equation}
where $P_1(w|d)$ measures the likelihood that the term frequency $\mathrm{tf}(w,d)$ is generated by a random process, and $P_2(w|d)$ adjusts for document length or importance via an elite set. 
We focus on the widely used \textsc{InL2} variant of the DFR framework. In the \textsc{InL2} model, the term weight is defined as:
\begin{align}
    -\log P_1(w|d) &= \mathrm{tf}(w,d) \cdot \log_2 \frac{N + 1}{\mathrm{df}(w) + 0.5}, \\
    1 - P_2(w|d) &= \frac{1}{\mathrm{tf}(w,d) + 1},
\end{align}
following the standard instantiation of the DFR framework.
This yields the term weight:
\begin{equation}
    \mathrm{Score}_{\mathrm{DFR}}(q,d) = \sum_{w \in q} 
    \left[ \mathrm{tf}(w,d) \cdot \log_2 \frac{N + 1}{\mathrm{df}(w) + 0.5} \right] \cdot
    \frac{1}{\mathrm{tf}(w,d) + 1},
\end{equation}
which simplifies to:
\begin{equation}
    \mathrm{Score}_{\mathrm{DFR}}(q,d) = \sum_{w \in q} \mathrm{idf}(w) \cdot f\bigl(\mathrm{tf}(w,d)\bigr),
\end{equation}
where $\mathrm{idf}(w) = \log_2 \frac{N + 1}{\mathrm{df}(w) + 0.5}$, and $f(t) = \frac{t}{t + 1}$ is a function that is strictly increasing and strictly concave.

Then:
\begin{equation}
    \mathbb{E}_{q,d}[\mathrm{Score}_{\mathrm{DFR}}] = \sum_w P_Q(w) \cdot \mathrm{idf}(w) \cdot \mathbb{E}_d\left[ f\bigl(\mathrm{tf}(w,d)\bigr) \right].
\end{equation}

Since $f$ is concave, by Jensen’s inequality:
\begin{equation}
    \mathbb{E}_d\left[ f\bigl(\mathrm{tf}(w,d)\bigr) \right] \le f\left( \mathbb{E}_d[\mathrm{tf}(w,d)] \right) = \frac{L_d P_D(w)}{L_d P_D(w) + 1}.
\end{equation}

Thus:
\begin{equation}
    \mathbb{E}_{q,d}[\mathrm{Score}_{\mathrm{DFR}}] \le \sum_w P_Q(w) \, \mathrm{idf}(w) \cdot \frac{L_d P_D(w)}{L_d P_D(w) + 1} =: F(P_D).
\end{equation}

$F(P_D)$ is strictly concave in $P_D$, so it has a unique maximizer $P_D^\star$.

Using Lagrange multiplier $\lambda$:
\begin{equation}
    \frac{\partial \mathcal{L}}{\partial P_D(w)} = P_Q(w) \, \mathrm{idf}(w) \cdot L_d \cdot \frac{1}{(L_d P_D^\star(w) + 1)^2} - \lambda = 0.
\end{equation}

This implies:
\begin{equation}
    (L_d P_D^\star(w) + 1)^2 = c \cdot P_Q(w) \, \mathrm{idf}(w),
\end{equation}
so:
\begin{equation}
    P_D^\star(w) = \max\left\{ 0,\ \alpha \sqrt{P_Q(w) \, \mathrm{idf}(w)} - \frac{1}{L_d} \right\}, \quad \alpha > 0.
\end{equation}

For $L_d \gg 1$, $1 / L_d \ll 1$, so:
\begin{equation}
    P_D^\star(w) \propto \sqrt{P_Q(w) \, \mathrm{idf}(w)}.
    \label{eq:dfr_root}
\end{equation}

Therefore, the expected DFR score is asymptotically maximized when $P_D(w)$ is proportional to the geometric mean of $P_Q(w)$ and $\mathrm{idf}(w)$.
\end{proof}

As shown in the analysis of BM25, under the empirically observed positive correlation between $\mathrm{idf}(w)$ and $P_Q(w)$, and particularly when $\mathrm{idf}(w) \propto P_Q(w)$, the optimal document distribution for DFR also converges to $P_D^\star = P_Q$. 
Thus, in this regime, the alignment conditions of DFR, BM25, and TF-IDF are identical.

This reveals a deep connection: although DFR, BM25, and TF-IDF are based on different scoring mechanisms—divergence from randomness, probabilistic ranking, and vector space geometry, respectively—they all implicitly encourage \emph{distributional alignment} between the query and document collections. The specific form of alignment (exact, smoothed, or IDF-weighted) depends on each model’s assumptions and inductive biases, but the underlying principle remains the same: \textbf{effective retrieval arises from aligning the query’s term distribution with that of the document}.
\section{Experiment Settings and Extensions}
\label{appendix:experiment_setting_and_extension}

\subsection{Retrieval Parameters}
\label{appendix:main_experiment_retrieval_config}
This section details the parameter configurations used in all retrieval methods. All main experiments in the paper employ the LLaMA-generated corpus with temperature $0.7$, consistent with the preliminary experiments in Appendix~\ref{appendix:preliminary_experiments}, ensuring methodological coherence.

The retrieval settings are as follows:

\begin{itemize}
    \item \textbf{TF-IDF}: Standard logarithmic TF-IDF with document length normalization
    \item \textbf{BM25}: $k_1=0.9$, $b=0.4$ \cite{yang2018anserini}
    \item \textbf{Query Likelihood (QL)}: Jelinek-Mercer smoothing, $\lambda=0.1$
    \item \textbf{DFR}: In-L-H2 model, $c=1.0$, $z=0.3$
\end{itemize}

All implementations were based on the Pyserini toolkit \cite{lin2021pyserini} with custom extensions for DFR and TF-IDF functionalities.

\subsection{Porter Stemming and Stopwords Removal}
\label{appendix:stemming_stopwords}

The main experiments were conducted without stemming or stopwords removal to maintain consistency with our linguistic analysis. Here we present additional experiments using Porter stemming with stopwords removal, which represents standard practice in term-based retrieval.

Table~\ref{tab:preference_analysis} shows the complete results with this standard preprocessing configuration. The pattern remains consistent: Human queries show a strong preference for human-written documents, while LLM queries often prefer LLM-generated content or show a reduced preference for human content. This stability across preprocessing methods further validates that our findings regarding distribution alignment and source preferences are robust and generalizable.

\begin{table*}[t]
\centering
\caption{Preference analysis results across different datasets and retrieval methods \textbf{Porter stemming and stopwords removal} (top 200 per query).  $\Delta$SR@10, $\Delta$NDSR@10, and $\Delta$MASR represent the preference metrics based on SR@10, NDSR@10, and MASR respectively. \textcolor{humanblue}{blue cells} indicate preference towards human-written documents, while \textcolor{llmred}{red cells} indicate preference towards LLM-generated content. * indicates statistical significance ($p < 0.05$).}
\label{tab:preference_analysis_porter}
\aboverulesep=0pt
\belowrulesep=0pt
\setlength{\heavyrulewidth}{0.5pt}
\setlength{\lightrulewidth}{0.5pt}
\resizebox{1\linewidth}{!}{
\renewcommand{\arraystretch}{1.3}
\begin{tabular}{cccccccccccccc}
\toprule
\multirow{2}{*}{\centering Dataset} & \multirow{2}{*}{\centering Query Type} & \multicolumn{4}{c}{$\Delta$SR@10} & \multicolumn{4}{c}{$\Delta$NDSR@10} & \multicolumn{4}{c}{$\Delta$MASR} \\
\cmidrule(lr){3-6} \cmidrule(lr){7-10} \cmidrule(lr){11-14}
& & TF-IDF & BM25 & QL & DFR & TF-IDF & BM25 & QL & DFR & TF-IDF & BM25 & QL & DFR \\
\midrule
\multirow{2}{*}{ANTIQUE} & Human & \posnum{0.428*} & \posnum{0.382*} & \posnum{0.450*} & \posnum{0.424*} & \posnum{0.460*} & \posnum{0.414*} & \posnum{0.474*} & \posnum{0.456*} & \posnum{0.376*} & \posnum{0.349*} & \posnum{0.412*} & \posnum{0.371*} \\
& LLM & \posnum{0.008} & \negnum{-0.025*} & \negnum{-0.006} & \negnum{-0.005} & \posnum{0.014} & \negnum{-0.025*} & \negnum{-0.013} & \negnum{-0.001} & \posnum{0.035*} & \posnum{0.011} & \posnum{0.039*} & \posnum{0.024*} \\
\cmidrule{2-14}
\multirow{2}{*}{FiQA} & Human & \posnum{0.080*} & \posnum{0.201*} & \posnum{0.183*} & \posnum{0.096*} & \posnum{0.090*} & \posnum{0.217*} & \posnum{0.188*} & \posnum{0.101*} & \posnum{0.072*} & \posnum{0.172*} & \posnum{0.179*} & \posnum{0.084*} \\
& LLM & \negnum{-0.153*} & \negnum{-0.018} & \negnum{-0.079*} & \negnum{-0.151*} & \negnum{-0.160*} & \negnum{-0.022} & \negnum{-0.086*} & \negnum{-0.156*} & \negnum{-0.113*} & \negnum{-0.004} & \negnum{-0.027*} & \negnum{-0.104*} \\
\midrule
\multirow{2}{*}{MS MARCO} & Human & \posnum{0.411*} & \posnum{0.390*} & \posnum{0.364*} & \posnum{0.396*} & \posnum{0.449*} & \posnum{0.428*} & \posnum{0.396*} & \posnum{0.434*} & \posnum{0.336*} & \posnum{0.322*} & \posnum{0.310*} & \posnum{0.327*} \\
& LLM & \posnum{0.249*} & \posnum{0.222*} & \posnum{0.128*} & \posnum{0.228*} & \posnum{0.272*} & \posnum{0.243*} & \posnum{0.143*} & \posnum{0.250*} & \posnum{0.199*} & \posnum{0.181*} & \posnum{0.120*} & \posnum{0.184*} \\
\cmidrule{2-14}
\multirow{2}{*}{NQ} & Human & \posnum{0.122*} & \posnum{0.147*} & \posnum{0.155*} & \posnum{0.122*} & \posnum{0.138*} & \posnum{0.158*} & \posnum{0.161*} & \posnum{0.137*} & \posnum{0.117*} & \posnum{0.141*} & \posnum{0.150*} & \posnum{0.118*} \\
& LLM & \posnum{0.017*} & \posnum{0.027*} & \negnum{-0.009} & \posnum{0.011*} & \posnum{0.023*} & \posnum{0.031*} & \negnum{-0.011} & \posnum{0.015*} & \posnum{0.016*} & \posnum{0.036*} & \posnum{0.011*} & \posnum{0.013*} \\
\cmidrule{2-14}
\multirow{2}{*}{HotpotQA} & Human & \posnum{0.149*} & \posnum{0.121*} & \posnum{0.100*} & \posnum{0.140*} & \posnum{0.185*} & \posnum{0.156*} & \posnum{0.140*} & \posnum{0.177*} & \posnum{0.112*} & \posnum{0.092*} & \posnum{0.068*} & \posnum{0.104*} \\
& LLM & \posnum{0.007*} & \negnum{-0.018*} & \negnum{-0.081*} & \negnum{-0.005} & \posnum{0.028*} & \posnum{0.001} & \negnum{-0.061*} & \posnum{0.016*} & \negnum{-0.026*} & \negnum{-0.042*} & \negnum{-0.097*} & \negnum{-0.036*} \\
\cmidrule{2-14}
\multirow{2}{*}{FEVER} & Human & \posnum{0.071*} & \posnum{0.079*} & \posnum{0.074*} & \posnum{0.063*} & \posnum{0.097*} & \posnum{0.104*} & \posnum{0.095*} & \posnum{0.090*} & \posnum{0.060*} & \posnum{0.070*} & \posnum{0.068*} & \posnum{0.056*} \\
& LLM & \negnum{-0.019*} & \posnum{0.000} & \negnum{-0.040*} & \negnum{-0.026*} & \negnum{-0.007*} & \posnum{0.012*} & \negnum{-0.040*} & \negnum{-0.014*} & \negnum{-0.038*} & \negnum{-0.014*} & \negnum{-0.049*} & \negnum{-0.043*} \\
\cmidrule{2-14}
\multirow{2}{*}{Climate-FEVER} & Human & \posnum{0.031*} & \posnum{0.096*} & \posnum{0.155*} & \posnum{0.032*} & \posnum{0.033*} & \posnum{0.102*} & \posnum{0.163*} & \posnum{0.035*} & \posnum{0.036*} & \posnum{0.090*} & \posnum{0.130*} & \posnum{0.041*} \\
& LLM & \negnum{-0.060*} & \posnum{0.003} & \posnum{0.036*} & \negnum{-0.063*} & \negnum{-0.064*} & \negnum{-0.000} & \posnum{0.036*} & \negnum{-0.067*} & \negnum{-0.036*} & \posnum{0.017*} & \posnum{0.036*} & \negnum{-0.033*} \\
\midrule
\multirow{2}{*}{SciDocs} & Human & \posnum{0.042*} & \posnum{0.076*} & \posnum{0.044*} & \posnum{0.039*} & \posnum{0.044*} & \posnum{0.084*} & \posnum{0.035*} & \posnum{0.038*} & \posnum{0.018*} & \posnum{0.046*} & \posnum{0.017*} & \posnum{0.018*} \\
& LLM & \negnum{-0.064*} & \negnum{-0.038*} & \negnum{-0.116*} & \negnum{-0.073*} & \negnum{-0.074*} & \negnum{-0.043*} & \negnum{-0.133*} & \negnum{-0.083*} & \negnum{-0.053*} & \negnum{-0.024*} & \negnum{-0.085*} & \negnum{-0.055*} \\
\bottomrule
\end{tabular}
}
\end{table*}

\begin{table*}[t]
\centering
\caption{Preference analysis results across different datasets and retrieval methods \textbf{without stemming and keeping stopwords} (top 200 per query).  $\Delta$SR@10, $\Delta$NDSR@10, and $\Delta$MASR represent the preference metrics based on SR@10, NDSR@10, and MASR respectively. \textcolor{humanblue}{blue cells} indicate preference towards human-written documents, while \textcolor{llmred}{red cells} indicate preference towards LLM-generated content. * indicates statistical significance ($p < 0.05$).}
\label{tab:preference_analysis_no_stem}
\renewcommand{\arraystretch}{1.3}
\resizebox{1\linewidth}{!}{
\begin{tabular}{cccccccccccccc}
\toprule
\multirow{2}{*}{\centering Dataset} & \multirow{2}{*}{\centering Query Type} & \multicolumn{4}{c}{$\Delta$SR@10} & \multicolumn{4}{c}{$\Delta$NDSR@10} & \multicolumn{4}{c}{$\Delta$MASR} \\
\cmidrule(lr){3-6} \cmidrule(lr){7-10} \cmidrule(lr){11-14}
& & TF-IDF & BM25 & QL & DFR & TF-IDF & BM25 & QL & DFR & TF-IDF & BM25 & QL & DFR \\
\midrule
\multirow{2}{*}{ANTIQUE} & Human & \posnum{0.436*} & \posnum{0.400*} & \posnum{0.449*} & \posnum{0.427*} & \posnum{0.463*} & \posnum{0.421*} & \posnum{0.469*} & \posnum{0.454*} & \posnum{0.381*} & \posnum{0.366*} & \posnum{0.419*} & \posnum{0.376*} \\
& LLM & \negnum{-0.017} & \negnum{-0.028*} & \negnum{-0.018} & \negnum{-0.031*} & \negnum{-0.015} & \negnum{-0.033*} & \negnum{-0.028*} & \negnum{-0.030*} & \posnum{0.018*} & \posnum{0.014*} & \posnum{0.040*} & \posnum{0.008} \\
\cmidrule{2-14}
\multirow{2}{*}{FiQA} & Human & \posnum{0.053*} & \posnum{0.190*} & \posnum{0.195*} & \posnum{0.062*} & \posnum{0.059*} & \posnum{0.201*} & \posnum{0.197*} & \posnum{0.069*} & \posnum{0.069*} & \posnum{0.179*} & \posnum{0.196*} & \posnum{0.082*} \\
& LLM & \negnum{-0.173*} & \negnum{-0.027} & \negnum{-0.039*} & \negnum{-0.165*} & \negnum{-0.184*} & \negnum{-0.034*} & \negnum{-0.054*} & \negnum{-0.174*} & \negnum{-0.110*} & \posnum{0.012} & \posnum{0.015} & \negnum{-0.099*} \\
\midrule
\multirow{2}{*}{MS MARCO} & Human & \posnum{0.433*} & \posnum{0.408*} & \posnum{0.391*} & \posnum{0.418*} & \posnum{0.464*} & \posnum{0.440*} & \posnum{0.413*} & \posnum{0.450*} & \posnum{0.361*} & \posnum{0.346*} & \posnum{0.342*} & \posnum{0.351*} \\
& LLM & \posnum{0.262*} & \posnum{0.231*} & \posnum{0.145*} & \posnum{0.241*} & \posnum{0.283*} & \posnum{0.250*} & \posnum{0.152*} & \posnum{0.261*} & \posnum{0.213*} & \posnum{0.194*} & \posnum{0.140*} & \posnum{0.198*} \\
\cmidrule{2-14}
\multirow{2}{*}{NQ} & Human & \posnum{0.125*} & \posnum{0.155*} & \posnum{0.179*} & \posnum{0.123*} & \posnum{0.138*} & \posnum{0.164*} & \posnum{0.184*} & \posnum{0.136*} & \posnum{0.125*} & \posnum{0.157*} & \posnum{0.183*} & \posnum{0.127*} \\
& LLM & \posnum{0.019*} & \posnum{0.043*} & \posnum{0.032*} & \posnum{0.014*} & \posnum{0.027*} & \posnum{0.048*} & \posnum{0.028*} & \posnum{0.020*} & \posnum{0.027*} & \posnum{0.055*} & \posnum{0.056*} & \posnum{0.025*} \\
\cmidrule{2-14}
\multirow{2}{*}{HotpotQA} & Human & \posnum{0.152*} & \posnum{0.138*} & \posnum{0.146*} & \posnum{0.144*} & \posnum{0.186*} & \posnum{0.170*} & \posnum{0.177*} & \posnum{0.179*} & \posnum{0.124*} & \posnum{0.118*} & \posnum{0.121*} & \posnum{0.118*} \\
& LLM & \posnum{0.005} & \negnum{-0.003} & \negnum{-0.039*} & \negnum{-0.002} & \posnum{0.029*} & \posnum{0.014} & \negnum{-0.024*} & \posnum{0.021*} & \negnum{-0.015*} & \negnum{-0.016*} & \negnum{-0.042*} & \negnum{-0.022*} \\
\cmidrule{2-14}
\multirow{2}{*}{FEVER} & Human & \posnum{0.076*} & \posnum{0.088*} & \posnum{0.101*} & \posnum{0.071*} & \posnum{0.097*} & \posnum{0.112*} & \posnum{0.122*} & \posnum{0.094*} & \posnum{0.068*} & \posnum{0.087*} & \posnum{0.099*} & \posnum{0.065*} \\
& LLM & \negnum{-0.025*} & \posnum{0.007*} & \negnum{-0.018*} & \negnum{-0.028*} & \negnum{-0.012*} & \posnum{0.018*} & \negnum{-0.015*} & \negnum{-0.016*} & \negnum{-0.037*} & \negnum{-0.002} & \negnum{-0.021*} & \negnum{-0.040*} \\
\cmidrule{2-14}
\multirow{2}{*}{Climate-FEVER} & Human & \posnum{0.035*} & \posnum{0.124*} & \posnum{0.215*} & \posnum{0.043*} & \posnum{0.036*} & \posnum{0.130*} & \posnum{0.227*} & \posnum{0.044*} & \posnum{0.047*} & \posnum{0.120*} & \posnum{0.191*} & \posnum{0.056*} \\
& LLM & \negnum{-0.068*} & \posnum{0.016*} & \posnum{0.077*} & \negnum{-0.059*} & \negnum{-0.079*} & \posnum{0.014} & \posnum{0.077*} & \negnum{-0.070*} & \negnum{-0.033*} & \posnum{0.040*} & \posnum{0.088*} & \negnum{-0.026*} \\
\midrule
\multirow{2}{*}{SciDocs} & Human & \posnum{0.010} & \posnum{0.053*} & \posnum{0.032*} & \posnum{0.014*} & \posnum{0.012} & \posnum{0.064*} & \posnum{0.029*} & \posnum{0.016*} & \posnum{0.005} & \posnum{0.042*} & \posnum{0.024*} & \posnum{0.008*} \\
& LLM & \negnum{-0.096*} & \negnum{-0.050*} & \negnum{-0.109*} & \negnum{-0.097*} & \negnum{-0.106*} & \negnum{-0.056*} & \negnum{-0.124*} & \negnum{-0.108*} & \negnum{-0.073*} & \negnum{-0.035*} & \negnum{-0.076*} & \negnum{-0.073*} \\
\bottomrule
\end{tabular}
}
\end{table*}

\begin{figure}[t]
    \centering
    \includegraphics[width=\linewidth]{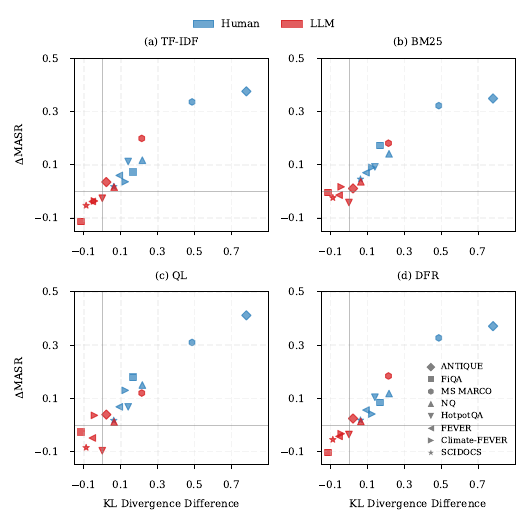}
    \caption{
        KL-divergence differences ($\text{KL}_{\text{LLaMA}} - \text{KL}_{\text{human}}$) versus $\Delta$ MASR with 
        \textbf{Porter stemming and stopwords removal} for four retrieval methods: (a) TF-IDF, (b) BM25, (c) QL, and (d) DFR. Blue and red points represent human and LLM queries respectively, with each point denoting a dataset.
    }
    \label{fig:masr_comparison_porter}
\end{figure}

To further examine the relationship between distribution alignment and retrieval preferences, Figure~\ref{fig:masr_comparison_porter} shows the correlation between KL-divergence differences and $\Delta$MASR with Porter stemming. We observe high Pearson correlation coefficients of 0.96, 0.95, 0.92, and 0.97 for TF-IDF, BM25, QL, and DFR respectively, all statistically significant (p < 0.001). These strong correlations across all four retrieval methods confirm that our distribution alignment theorem remains robust across different preprocessing configurations.

\subsection{Metrics Extension}
\label{appendix:main_metrics_extension}
 To further validate our distribution alignment theorem, we extend our analysis to three evaluation metrics for the two preprocessing methods.

 The following tables summarize the performance of the three metrics under each preprocessing configuration: Table~\ref{tab:preference_analysis_porter} shows the results with stemming and stopwords removal, while Table~\ref{tab:preference_analysis_no_stem} shows the results without stemming and keeping stopwords.

\begin{figure*}[t]
\centering
\includegraphics[width=\linewidth]{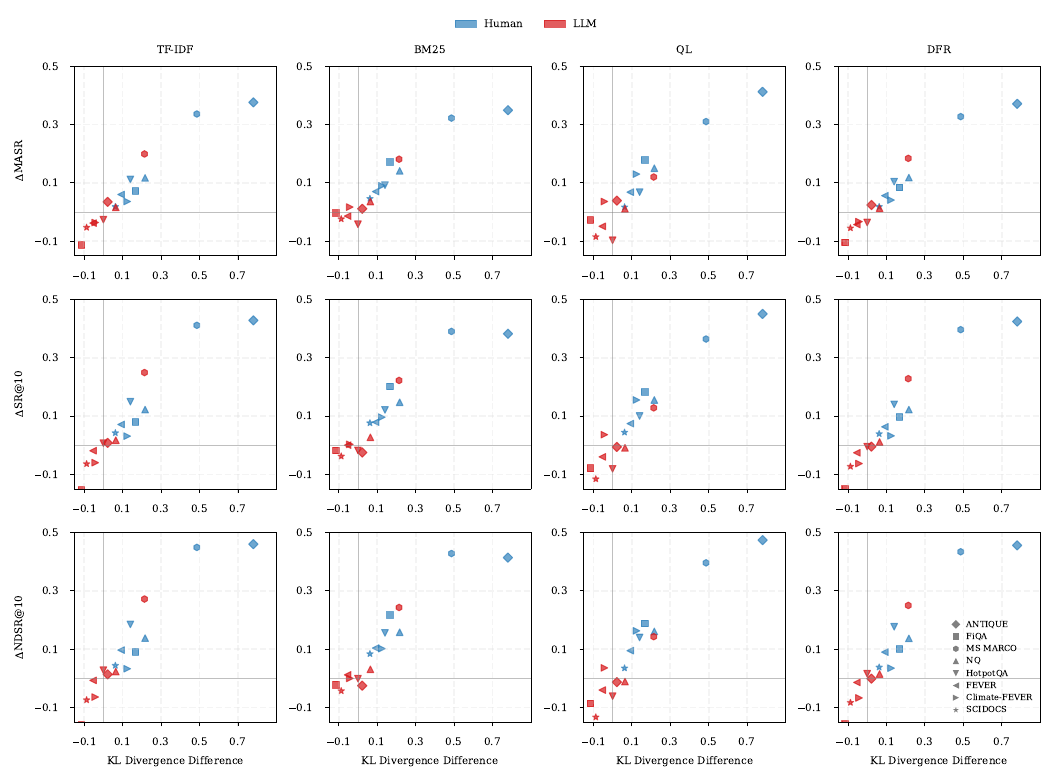}
\caption{
KL-divergence differences versus preference metrics with \textbf{Porter stemming and stopwords removal} for four retrieval methods. Rows correspond to $\Delta$SR@10, $\Delta$NDSR@10, and $\Delta$MASR; columns correspond to TF-IDF, BM25, QL, and DFR. Blue points represent human queries, red points represent LLM queries, with each point representing a dataset.
}
\label{fig:kl_divergence_all_metrics_stem}
\end{figure*}

Figure~\ref{fig:kl_divergence_all_metrics_stem} demonstrates that the strong correlation between KL-divergence differences and retrieval preferences holds consistently across multiple metrics with Porter stemming. This pattern extends to both Success Rate (SR@10) and Normalized Discounted Success Rate (NDSR@10), shown in the top and middle rows. All correlations remain statistically significant (p < 0.05) across all four retrieval methods.
\begin{figure*}[t]
\centering
\includegraphics[width=\linewidth]{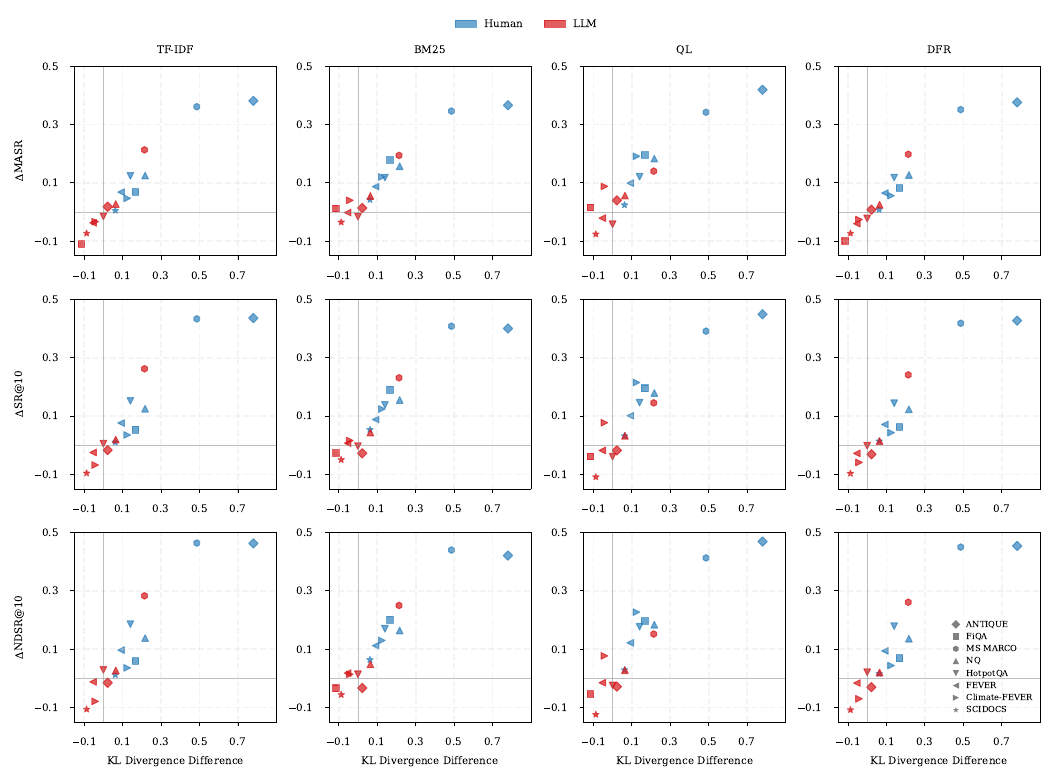}
\caption{
KL-divergence differences versus preference metrics \textbf{without stemming and keeping stopwords} for four retrieval methods. Rows correspond to $\Delta$SR@10, $\Delta$NDSR@10, and $\Delta$MASR; columns correspond to TF-IDF, BM25, QL, and DFR. Blue points represent human queries, red points represent LLM queries, with each point representing a dataset.
}
\label{fig:kl_divergence_all_metrics_no_stem}
\end{figure*}

For completeness, we also examine configurations without stemming, as shown in Figure~\ref{fig:kl_divergence_all_metrics_no_stem}. The strong correlations persist across all metrics and retrieval methods (all p < 0.05). This consistency further validates that our findings are not artifacts of specific preprocessing choices.

The robust relationship across different metrics and preprocessing settings underscores our conclusion that term-based retrieval models favor documents with term distributions that better match the query's distribution, rather than exhibiting inherent source bias. Importantly, this relationship holds regardless of whether we measure retrieval effectiveness using SR@10, NDSR@10, or MASR, suggesting that the distribution alignment principle is fundamental to understanding term-based retrieval behavior across various effectiveness measures.

\end{document}